\def\@doi#1{\href{https://doi.org/#1}
      {\ttfamily https://doi.org/#1}\egroup}}
\def\@doi#1{\ttfamily https://doi.org/#1\egroup}}
  \def\doi{\bgroup\catcode`\_=12\relax\@doi}}
\newtheorem{definition}{Definition}[section]
\newtheorem{theorem}{Theorem}[section]
\newtheorem{lemma}[theorem]{Lemma}
\renewcommand{\ge}{\geqslant}
\renewcommand{\geq}{\geqslant}
\newcommand{\ratioc}[2]{${#1}$ && ${#2}$ && $\times$\fpeval{trunc({#1} / {#2}, 2)}}
\newcommand{\fv}[1]{\ensuremath{\mathit{fv}({#1})}}
\newcommand{\succs}[1]{\ensuremath{\mathord{\downarrow}{#1}}}
\newcommand{\caesar}{\textsc{c{\ae}sar.bdd}}
\newcommand{\conc}{\ensuremath{\mathbin{\|}}}
\newcommand{\Conc}{\ensuremath{\mathrm{C}}}
\newcommand{\bconc}{\mathop{\mathcal{C}}}
\newcommand{\bind}{\mathop{\bar{\mathcal{C}}}}
\newcommand{\red}{\mathbin{{\rightarrow}\!{\bullet}}}
\newcommand{\agg}{\mathbin{{\circ}\!{\rightarrow}}}
\newcommand{\lvl}[1]{\mathrm{lvl}({#1})}
\newcommand{\maseq}[1]{\lfloor {#1} \rfloor}
\newcommand{\comma}{\mathbin{\raisebox{0.2ex}{\textbf{,}}}}
\newcommand{\Nat}{\mathbb{N}}
\newcommand{\reduc}{\vartriangleright}
\newcommand*{\defeq}{\triangleq}
\def \rightarrowfill{\m@th\mathord{\smash-}\mkern-6mu%
  \cleaders\hbox{$\mkern-2mu\mathord{\smash-}\mkern-2mu$}\hfill
  \mkern-6mu\mathord\rightarrow}
\def \Rightarrowfill{\m@th\mathord{\smash-}\mkern-6mu%
  \cleaders\hbox{$\mkern-2mu\mathord{\smash-}\mkern-2mu$}\hfill
  \mkern-6mu\mathord\Rightarrow}
\def \rightarrowfill{\m@th\mathord{\smash-}\mkern-6mu%
  \cleaders\hbox{$\mkern-2mu\mathord{\smash-}\mkern-2mu$}\hfill
  \mkern-6mu\mathord\rightarrow}
\def \Rightarrowfill{\m@th\mathord{\smash=}\mkern-6mu%
  \cleaders\hbox{$\mkern-2mu\mathord{\smash=}\mkern-2mu$}\hfill
  \mkern-6mu\mathord\Rightarrow}
\def \midrightarrowfill{\m@th\mathord{\smash{\raisebox{.2ex}{$\scriptscriptstyle\mid$}}\!\!\,-}\mkern-6mu%
  \cleaders\hbox{$\mkern-2mu\mathord{\smash-}\mkern-2mu$}\hfill
  \mkern-6mu\mathord\rightarrow}
\def \midRightarrowfill{\m@th\mathord{\smash{\raisebox{.1ex}{$\scriptstyle\mid$}}\!\!\!=}\mkern-6mu%
  \cleaders\hbox{$\mkern-2mu\mathord{\smash=}\mkern-2mu$}\hfill
  \mkern-6mu\mathord\Rightarrow}
\newcommand{\overstackrel}[2]{\mathrel{\mathop{#1}\limits^{#2}}}
\newcommand{\trans}[1]{\mathbin{\smash[t]{\overstackrel{\rightarrowfill}{\ #1\ }}}}
\newcommand{\tfg}[2][{}]{\llbracket {#2} \rrbracket_{#1}}
\tikzset{
  big stealth/.style={
    decoration={markings,mark=at position -(0.1pt) with {\arrow[scale=2*\scale]{stealth}}},
    postaction={decorate},
    shorten >=0.4pt}}
\tikzset{
  big ring/.style={
    decoration={markings,mark=at position -(0.1pt) with {\arrow[scale=1.5*\scale]{o}}},
    postaction={decorate},
    shorten >=8pt*\scale}}
\tikzset{
  big disc/.style={
    decoration={markings,mark=at position -(0.1pt) with {\arrow[scale=1.5*\scale]{*}}},
    postaction={decorate},
    shorten >=8pt*\scale}}
\tikzset{
  big box/.style={
    decoration={markings,mark=at position -(0.1pt) with {\arrow[scale=1.5*\scale]{open square}}},
    postaction={decorate},
    shorten >=8pt*\scale}}
\tikzset{
  big tile/.style={
    decoration={markings,mark=at position -(0.1pt) with {\arrow[scale=1.5*\scale]{square}}},
    postaction={decorate},
    shorten >=8pt*\scale}}
\tikzstyle{state}=[circle, very thick, fill, top color=white, bottom color=white, draw=black, minimum size=40pt, drop shadow]
\tikzstyle{place}=[circle, very thick, fill, top color=white, bottom color=white, draw=black, minimum size=40pt, drop shadow]
\tikzstyle{bplace}=[circle, very thick, fill, top color=cyan, fill
\tikzstyle{trans}=[rectangle, very thick, fill, top color=white, bottom color=white, draw=black, minimum size=32pt, drop shadow]
\tikzstyle{btrans}=[rectangle, very thick, fill, top color=cyan, fill
\tikzstyle{arc}=[thick, big stealth, black]
\tikzstyle{barc}=[thick, big stealth, cyan]
\tikzstyle{read}=[thick, big disc, black]
\tikzstyle{inhibitor}=[thick, big ring, black]
\tikzstyle{stopwatch}=[thick, big tile, black]
\tikzstyle{stopwatchinhibitor}=[thick, big box, black]
\tikzstyle{priority}=[thick, big stealth, orange]
\tikzstyle{enabling}=[thick, big disc, orange]
\tikzstyle{disabling}=[thick, big ring, orange]
\tikzstyle{token}=[circle, fill, draw=black, minimum size=4pt]
\tikzstyle{glob-options}=[label
\tikzstyle{virtual}=[circle, draw=white, minimum size=20pt]
\tikzstyle{tnode}=[circle, thick, fill, top color=white, bottom color=white, draw=black, minimum size=30pt]
\tikzstyle{tred} = [thick, -{To}{Circle}]
\tikzstyle{tagg} = [thick, {Circle[open]}-{To}]
\title{Accelerating the Computation of Dead and Concurrent Places using Reductions}
\author[1]{Nicolas Amat}
\author[1]{Silvano Dal Zilio}
\author[1]{Didier {Le Botlan}}
\affil[1]{LAAS-CNRS, Universit\'{e} de Toulouse, CNRS, INSA, Toulouse, France}
\date{}
\begin{document}
\maketitle
\sloppy

\begin{abstract}
  We propose a new method for accelerating the computation of 
  a concurrency relation, that is all pairs of places in a Petri net
  that can be marked together. Our approach relies on a state space
  abstraction, that involves a mix between structural reductions and
  linear algebra, and a new data-structure that is specifically
  designed for our task.  Our algorithms are implemented in a tool,
  called Kong, that we test on a large collection of models used
  during the 2020 edition of the Model Checking Contest. Our
  experiments show that the approach works well, even when a moderate
  amount of reductions applies.
\end{abstract}

\section{Introduction}

We propose a new approach for computing the \emph{concurrency
  relation} of a Petri net, that is all pairs of places that can be
marked together in some reachable states. This problem has practical
applications, for instance because of its use for decomposing a Petri
net into the product of concurrent
processes~\cite{janicki_automatic_2020,garavel_nested-unit_2019}. It
also provides an interesting example of safety property that nicely
extends the notion of \emph{dead places}.
These problems raise difficult technical challenges and provide an
opportunity to test and improve new model checking
techniques~\cite{garavel2021proposal}.

Naturally, it is possible to compute the concurrency relation by first
computing the complete state space of a system and then checking,
individually, the reachability of each pair of places. But this
amounts to solving a quadratic number of reachability
properties---where the parameter is the number of places in the
net---and one would expect to find smarter solutions, even if it is
only for some specific cases. We are also interested in partial
solutions, where computing the whole state space is not feasible.

We recently became interested in this problem because we see it as a
good testbed for a new model checking technique that we are actively
developing~\cite{tacas,berthomieu2018petri,berthomieu_counting_2019}. It
is an abstraction technique, based on the use of structural
reductions~\cite{berthelot_transformations_1987}, that we successfully
implemented into a symbolic model checker called Tedd. The idea is to
compute reductions of the form $N_1 \reduc_E N_2$, where: $N_1$ is an
initial Petri net (that we want to analyse); $N_2$ is a residual net
(hopefully simpler than $N_1$); and $E$ is a system of linear
equations. The goal is to preserve enough information in $E$ so that
we can rebuild the reachable markings of $N_1$ knowing only those of
$N_2$. While there are many examples of the benefits of structural
reductions when model checking Petri nets, the use of an equation
system ($E$) for tracing back the effect of reductions is new, and we
are hopeful that this approach can be applied to other problems. For
example, we proved recently~\cite{tacas} that this approach also works
well when combined with SMT.

In this paper, we confirm that the same holds true when we tackle the
concurrent places problem. In practice, we can often reduce a net $N_1$
into another net $N_2$ with far fewer places. We show that we can
reconstruct the concurrency relation of $N_1$ from the one of $N_2$,
using a surprising and very efficient ``inverse transform'' that
depends only on $E$ and does not involve computing reachable
markings. (This is a model checking paper where no transitions are
fired!)  This is useful since the number of places is a predominant
parameter when computing the concurrency relation. Note that we are
not concerned with how to compute the relation on $N_2$, but only by
how we can \emph{accelerate} its calculation on $N_1$.

\paragraph*{Related Work.} Several works address the problem of
finding or characterizing the concurrent places of a Petri net. This
notion is mentioned under various names, such as \emph{coexistency
  defined by markings}~\cite{janicki_nets_1984}, \emph{concurrency
  graph}~\cite{wisniewski_prototyping_2018} or \emph{concurrency
  relation}~\cite{garavel_state_2004,kovalyov_concurrency_1992,kovalyov_polynomial_2000,semenov_combining_1995,wisniewski2019c}.
The main motivation is that the concurrency relation characterizes the
sub-parts, in a net, that can be simultaneously active. Therefore it
plays a useful role when decomposing a net into a collection
of independent components. This is the case in~\cite{wisniewski2019c},
where the authors draw a connection between concurrent places and the
presence of ``sequential modules (state machines)''. Another example
is the decomposition of nets into unit-safe NUPNs (Nested-Unit Petri
Nets)~\cite{janicki_automatic_2020,garavel_nested-unit_2019}, for which
the computation of the {concurrency relation} is one of the main
bottlenecks.

We know only a couple of tools that support the computation of the
concurrency relation. A recent tool is part of the Hippo
platform~\cite{wisniewski2019c}, available online. Our reference tool
is \caesar, from the CADP toolbox~\cite{pbhg2021,cadp}, that uses BDD
techniques to explore the state space of a net and find concurrent
places. It supports the computation of a partial relation and can
output the ``concurrency matrix'' of a net using a specific textual
format~\cite{garavel2021proposal}. We adopt the same format since we
use \caesar\ to compute the concurrency relation on the residual net,
$N_2$, and as a yardstick in our benchmarks.

Concerning our use of structural reductions, our main result can be
interpreted as an example of \emph{reduction
  theorem}~\cite{lipton_reduction_1975}, that allows to deduce
properties of an initial model ($N_1$) from properties of a simpler,
coarser-grained version ($N_2$). But our notion of reduction is more
complex and corresponds to the one pioneered by
Berthelot~\cite{berthelot_transformations_1987} (with the equations
added). Several tools use reductions for checking reachability
properties but none specializes in computing the concurrency
relation. We can mention TAPAAL~\cite{bonneland2019stubborn}, an
explicit-state model checker that combines partial-order reduction
techniques and structural reductions or, more recently, ITS
Tools~\cite{thierry-mieg_structural_2020}, which combines several
techniques, including structural reductions and the use of SAT and SMT
solvers.

\paragraph*{Outline and Contributions.}
We define the semantics of Petri nets and the notion of concurrent
places in Sect.~\ref{sec:petri-nets-polyh}.  This section also
introduces a simplified notion of ``reachability equivalence'', called
\emph{polyhedral abstraction}, that gives a formal definition to the
relation $N_1 \reduc_E N_2$.  Section~\ref{sec:tfg} contains our main
contributions. We describe a new data-structure, called Token Flow
Graph (TFG), that captures the particular structure of the equation
system generated with our approach. We prove several results on TFGs
that allow us to reason about the reachable places of a net by playing
a token game on this graph.  We use TFGs
(Sect.~\ref{sec:change_dimension_algorithm}) to define an algorithm
that implements our ``inverse transform'' and show how to adapt it to
situations where we only have partial knowledge of the residual
concurrency relation.  Our approach has been implemented and computing
experiments (Sect.~\ref{sec:experimental-results}) show that
reductions are effective on a large set of models. We perform our
experiments on an independently managed collection of Petri nets
($588$ instances) corresponding to the safe nets used during the 2020
edition of the Model Checking Contest~\cite{mcc2019}.  We observe
that, even with a moderate amount of reductions (say we can remove
$25\%$ 
of the places), we can compute complete results much faster with
reductions than without (often with speed-ups greater than $\times
100$). We also show that we perform well with incomplete relations,
where we are both faster and more accurate. We include the proofs of
all our results in the appendix.

\section{Petri Nets and Polyhedral Abstraction}
\label{sec:petri-nets-polyh}

A \textit{Petri net} $N$ is a tuple
$(P, T, \textbf{pre}, \textbf{post})$ where $P = \{p_1, \dots, p_n\}$
is a finite set of places, $T = \{t_1, \dots, t_k\}$ is a finite set
of transitions (disjoint from $P$), and
$\textbf{pre} : T \rightarrow (P \rightarrow \mathbb{N})$ and
$\textbf{post} : T \rightarrow (P \rightarrow \mathbb{N})$ are the
pre- and post-condition functions (also called the flow functions of
$N$). We often simply write that $p$ is a place of $N$ when $p \in P$.
A state $m$ of a net, also called a \emph{marking}, is a total mapping
$m : P \rightarrow \mathbb{N}$ which assigns a number of
\emph{tokens}, $m(p)$, to each place of $N$. A marked net $(N, m_0)$
is a pair composed of a net and its initial marking $m_0$.

A transition $t \in T$ is \textit{enabled} at marking $m \in \Nat^P$
when $m(p) \ge \textbf{pre}(t,p)$ for all places $p$ in $P$. (We can
also simply write $m \geq \textbf{pre}(t)$, where $\geq$ stands for
the component-wise comparison of markings.) A marking $m'$ is
reachable from a marking $m$ by firing transition $t$, denoted
$m \trans{t} m'$, if: (1) transition $t$ is enabled at $m$; and (2)
$m' = m - \textbf{pre}(t) + \textbf{post}(t)$. When the identity of
the transition is unimportant, we simply write this relation
$m \trans{} m'$.  More generally, marking $m'$ is reachable from $m$
in $N$, denoted $m \trans{}^\star m'$ if there is a (possibly empty)
sequence of reductions such that $m \trans{} \dots \trans{} m'$.  We
denote $R(N, m_0)$ the set of markings reachable from $m_0$ in $N$.

A marking $m$ is $k$-{bounded} when each place has at most $k$
tokens and a marked Petri net $(N, m_0)$ is bounded when there is a
constant $k$ such that all reachable markings are $k$-bounded. While
most of our results are valid in the general case---with nets that are
not necessarily bounded and without any restrictions on the flow
functions (the weights of the arcs)---our tool and our experiments
focus on the class of $1$-bounded nets, also called \emph{safe} nets.

\begin{figure}[htbp]
  \centering
  \includegraphics[width=0.85\textwidth]{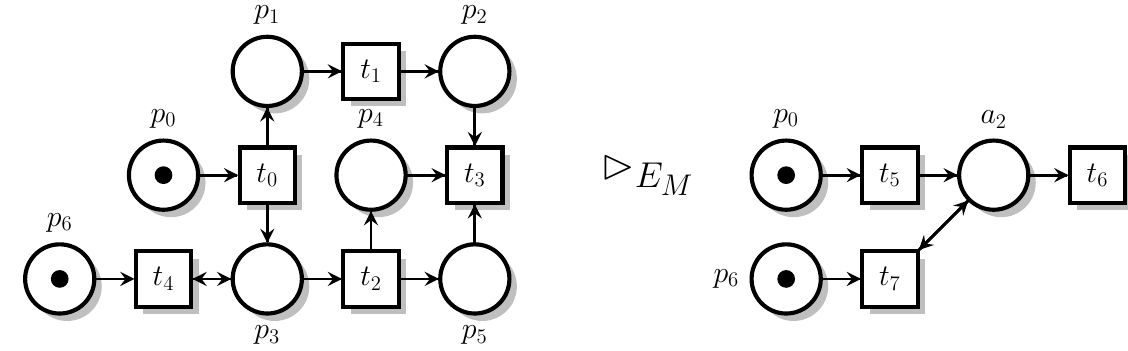}
  \caption{An example of Petri net, $M_1$ (left), and one of its polyhedral
    abstraction, $M_2$ (right), with $E_M \defeq (p_5 = p_4), (a_1 = p_1 + p_2),
    (a_2 = p_3 + p_4), (a_1 = a_2)$.\label{fig:stahl}}
\end{figure}

Given a marked net $(N, m_0)$, we say that places $p, q$ of $N$ are
concurrent when there exists a reachable marking $m$ with both $p$ and
$q$ marked.
The \textit{Concurrent Places} problem consists in enumerating all
such pairs of places.

\begin{definition}[Dead and Concurrent places]
  We say that a place $p$ of $(N, m_0)$ is \emph{not-dead}
  if there is $m$ in $R(N,m_0)$ such that
  $m(p) > 0$. In a similar way, we say that places $p,q$ are
  \emph{concurrent}, denoted $p \conc q$, if there is $m$ in
  $R(N,m_0)$ such that both $m(p) > 0$ and $m(q) > 0$. By extension,
  we use the notation $p \conc p$ when $p$ is not-dead. We say that
  $p, q$ are nonconcurrent, denoted $p \mathbin{\#} q$, when they are
  not concurrent.
\end{definition}

\paragraph*{Relation with Linear Arithmetic Constraints.} Many
results in Petri net theory are based on a relation with linear
algebra and linear programming
techniques~\cite{murata1989petri,silva1996linear}. A celebrated
example is that the potentially reachable markings of a net
$(N, m_0)$ are non-negative, integer solutions to the \emph{state
  equation} problem, $m = I \cdot \sigma + m_0$, with $I$ an integer
matrix defined from the flow functions of $N$ and $\sigma$ a vector in
$\Nat^k$.
It is known that solutions to the system of linear equations
$\sigma^T \cdot I = \vec{0}$ lead to \emph{place invariants},
$\sigma^T \cdot m = \sigma^T \cdot m_0$, that can provide some
information on the decomposition of a net into blocks of nonconcurrent
places, and therefore information on the concurrency relation.

For example, for net $M_1$ (Fig.~\ref{fig:stahl}), we can compute
invariant $p_4 - p_5 = 0$. This is enough to prove that places $p_4$
and $p_5$ are concurrent, if we can prove that at least one of them is
not-dead. Likewise, an invariant of the form $p + q = 1$ is enough to
prove that $p$ and $q$ are $1$-bounded and cannot be
concurrent. Unfortunately, invariants provide only an
over-approximation of the set of reachable markings, and it may be
difficult to find whether a net is part of the few known classes where the
set of reachable markings equals the set of potentially reachable
ones~\cite{hujsa:hal-02992521}.

Our approach shares some similarities with this kind of reasoning. A
main difference is that we will use equation systems to draw a
relation between the reachable markings of two nets; not to express
constraints about (potentially) reachable markings inside one
net. Like with invariants, this will allow us, in many cases, to
retrieve information about the concurrency relation without ``firing
any transition'', that is without exploring the state space.

In the following, we will often use place names as variables, and
markings $m : P \to \Nat$ as partial solutions to a set of linear
equations. For the sake of simplicity, all our equations will be of
the form $x = y_1 + \dots + y_l$ or $y_1 + \dots + y_l = k$ (with $k$
a constant in $\Nat$).

Given a system of linear equations $E$, we denote $\fv{E}$ the set of
all its variables. We are only interested in the non-negative integer
solutions of $E$. Hence, in our case, a \emph{solution} to $E$ is a
total mapping from variables in $\fv{E}$ to $\Nat$ such that all the
equations in $E$ are satisfied.  We say that $E$ is \emph{consistent}
when there is at least one such solution.
Given these definitions, we say that the mapping
$m: \{p_1, \dots, p_n\} \to \Nat$ is a (partial) solution of $E$ if
the system $E \comma \maseq{m}$ is consistent, with $\maseq{m}$ the
sequence of equations $p_1 = m(p_1) \comma \dots\comma p_n = m(p_n)$. (In some
sense, we use $\maseq{m}$ as a substitution.) For
instance, places $p, q$ are concurrent if the system
$p = 1 + x \comma q = 1 + y \comma \maseq{m}$ is consistent, where $m$ is a
reachable marking and $x, y$ are some fresh (slack) variables.

Given two markings $m_1 : P_1 \to \Nat$ and $m_2 : P_2 \to \Nat$, from
possibly different nets, we say that $m_1$ and $m_2$ are
\emph{compatible}, denoted $m_1 \equiv m_2$, if they have equal
marking on their shared places: $m_1(p) = m_2(p)$ for all $p$ in
$P_1 \cap P_2$. This is a necessary and sufficient condition for the
system $\maseq{m_1} \comma \maseq{m_2}$ to be consistent.

\paragraph*{Polyhedral Abstraction.}
We recently defined a notion of \emph{polyhedral abstraction} based on
our previous work applying structural reductions to model
counting~\cite{tacas,berthomieu_counting_2019}. We only need a
simplified version of this notion here, which entails an equivalence
between the state space of two nets, $(N_1, m_1)$ and $(N_2, m_2)$,
``up-to'' a system $E$ of linear equations.

\begin{definition}[$E$-equivalence]
  We say that $(N_1, m_1)$ is $E$-equivalent to\linebreak $(N_2, m_2)$, denoted
  $(N_1, m_1) \reduc_E (N_2, m_2)$, if and only if:
  \begin{description}
  \item [(A1)] $E \comma \maseq{m}$ is consistent for all markings $m$
    in $R(N_1, m_1)$ and $R(N_2, m_2)$;

  \item[(A2)] initial markings are \emph{compatible}, meaning
    $E \comma \maseq{m_1} \comma \maseq{m_2}$ is consistent;

  \item[(A3)] assume $m'_1, m'_2$ are markings of $N_1,N_2$,
    respectively, such that $E\comma \maseq{m'_1}\comma \maseq{m'_2}$ is
    consistent, then $m'_1$ is reachable if and only if $m'_2$ is
    reachable:\par $m'_1 \in R(N_1,m_1) \iff m_2' \in R(N_2, m_2)$.
  \end{description}
\end{definition}

By definition, relation $\reduc_E$ is symmetric. We deliberately use a
symbol oriented from left to right to stress the fact that $N_2$
should be a reduced version of $N_1$. In particular, we expect to have
less places in $N_2$ than in $N_1$.

Given a relation $(N_1, m_1) \reduc_E (N_2, m_2)$, each marking $m'_2$
reachable in $N_2$ can be associated to a unique subset of markings in
$N_1$, defined from the solutions to $E\comma \maseq{m'_2}$ (by
condition A1 and A3). We can show that this gives a partition of the
reachable markings of $(N_1, m_1)$ into ``convex sets''---hence the
name polyhedral abstraction---each associated to a reachable marking
in $N_2$. Our approach is particularly useful when the state space of
$N_2$ is very small compared to the one of $N_1$.  In the extreme
case, we can even find examples where $N_2$ is the ``empty'' net (a
net with zero places, and therefore a unique marking), but this
condition is not a requisite in our approach.

We can illustrate this result using the two marked nets $M_1, M_2$ in
Fig.~\ref{fig:stahl}, for which we can prove that
$M_1 \reduc_{E_M} M_2$. We have that
$m'_2 \defeq {a_2} = {1} \comma {p_6} = {1}$ is reachable in $M_2$,
which means that every solution to the system
$p_0 = 0 \comma p_1 + p_2 = 1 \comma p_3 + p_4 = 1 \comma p_4 = p_5
\comma p_6 = 1$ gives a reachable marking of $M_1$. Moreover, every
solution such that $p_i \geq 1$ and $p_j \geq 1$ gives a witness that
$p_i \conc p_j$. For instance, $p_1, p_4, p_5$ and $p_6$ are certainly
concurrent together.  We should exploit the fact that, under some
assumptions about $E$, we can find all such ``pairs of variables''
without the need to explicitly solve systems of the form
$E\comma \maseq{m}$; just by looking at the structure of $E$.

For this current work, we do not need to explain how to derive or check that an equivalence
statement is correct in order to describe our method. In practice, we
start from an initial net, $(N_1, m_1)$, and derive $(N_2, m_2)$ and $E$
using a combination of several structural reduction rules. You can
find a precise description of our set of rules
in~\cite{berthomieu_counting_2019} and a proof that the result of
these reductions always leads to a valid $E$-equivalence
in~\cite{tacas}. In most cases,
the system of linear equations obtained using this process exhibits a
graph-like structure. In the next section, we describe a set of
constraints that formalizes this observation. This is one of the
contributions of this paper, since we never defined something
equivalent in our previous works. We show with our benchmarks
(Sect.~\ref{sec:experimental-results}) that these constraints are
general enough to give good results on a large set of models.

\section{Token Flow Graphs}
\label{sec:tfg}

We introduce a set of structural constraints on the equations occurring
in an equivalence statement $(N_1, m_1) \reduc_E (N_2, m_2)$. The
goal is to define an algorithm that is able to easily compute
information on the concurrency relation of $N_1$, given the
concurrency relation on $N_2$, by taking advantage of the structure of
the equations in $E$.

We define the \textit{Token Flow Graph} (TFG) of a system $E$
of linear equations
as a Directed Acyclic Graph (DAG) with one vertex for each variable occurring
in $E$. Arcs in the TFG are used to depict the relation induced by equations in
$E$. We consider two kinds of arcs. Arcs for \emph{redundancy equations},
$q \red p$, to represent equations of the
form $p = q$ (or $p = q + r + \dots$), expressing that the marking of
place $p$ can be reconstructed from the marking of $q, r, \dots$
In this case, we say that place $p$ is
\emph{removed} by arc $q \red p$, because the marking of $q$ may influence the marking
of $p$, but not necessarily the other way round.

The second kind of arcs, $a
\agg p$, is for \emph{agglomeration equations}. It represents equations of the
form $a = p + q$, generated when we agglomerate several places into a new one.
In this case, we expect that if we can reach a marking with $k$
tokens in $a$, then we can certainly reach a marking with $k_1$ tokens in $p$
and $k_2$ tokens in $q$ when $k = k_1 + k_2$ (see property
Agglomeration in Lemma~\ref{lemma:forward_propagation}).  Hence information
flows in reverse order compared to the case of redundancy equations. This is
why, in this case, we say that places/nodes $p$ and $q$ are removed.
We also say that node $a$ is \emph{inserted}; it does not appear in $N_1$ but may
appear as a new place in $N_2$. We can have more than two places
in an agglomeration.

A TFG can also include nodes for
\emph{constants}, used to express invariant statements on the markings of the
form $p + q = k$. To this end, we assume that we have a family of disjoint sets
$K(n)$ (also disjoint from place and variable names), for each $n$ in $\Nat$,
such that the ``valuation'' of a node $v \in K(n)$ will always be $n$. We use
$K$ to denote the set of all constants.

\begin{definition}[Token Flow Graph]
  A TFG with set of places $P$ is a directed (bi)graph $(V, R, A)$
  such that: $V = P \cup S$ is a set of vertices (or nodes) with
  $S \subset K$ a finite set of constants; $R \in V\times V$ is a set
  of \emph{redundancy arcs}, $v \red v'$; and $A \in V\times V$ is a
  set of \emph{agglomeration arcs}, $v \agg v'$, disjoint from $R$.
\end{definition}

The main source of complexity in our approach arises from the need to
manage interdependencies between $A$ and $R$ nodes, that is situations
where redundancies and agglomerations alternate. This is not something
that can be easily achieved by looking only at the equations in $E$
and what motivates the need to define a specific data-structure.

We define several notations that will be useful in the
following.  We use the notation $v \rightarrow v'$ when we have
$(v \red v')$ in $R$ or $(v \agg v')$ in $A$.  We say that a node $v$
is a \textit{root} if it is never the target of an arc. A sequence of
nodes $(v_1, \dots, v_n)$ in $V^n$ is a \textit{path} if we have
$v_i \rightarrow v_{i+1}$ for all $i < n$. We use the notation
$v \rightarrow^\star v'$ when there is a path from $v$ to $v'$ in the
graph, or when $v = v'$.  We write $v \agg X$ when $X$ is the largest
subset $\{v_1, \dots, v_k\}$ of $V$ such that $X \neq \emptyset$ and
$v \agg v_i \in A$ for all $i \in 1..k$. Similarly, we write
$X \red v$ when $X$ is the largest, non-empty set of nodes
$\{v_1, \dots, v_k\}$ such that $v_i \red v \in R$ for all
$i \in 1..k$.

We display an example of Token Flow Graphs in
Fig.~\ref{fig:HuConstruction_TFG}, where ``black dot'' arcs
model edges in $R$ and ``white dot'' arcs model edges in $A$.
The idea is that each relation $X \red v$ or $v \agg X$
corresponds to one equation $v = \sum_{v_i \in X} v_i$ in $E$, and
that all the equations in $E$ should be reflected in the TFG.
We want to avoid situations where
the same place is removed more than once, or where some place occurs
in the TFG but is never mentioned in $N_1, N_2$ or $E$.  All these
constraints can be expressed using a suitable notion of well-formed
graph.

\begin{definition}[Well-Formed TFG]
  A TFG $G = (V, R, A)$ for the equivalence statement
  $(N_1, m_1) \vartriangleright_E (N_2, m_2)$ is \emph{well-formed}
  when all the following constraints are met, where $P_1$ and $P_2$ stand
  for the set of places in $N_1$ and $N_2$:
  \begin{description}
  \item[(T1)] \emph{no unused names:} $V \setminus K = P_1 \cup P_2 \cup \fv{E}$,
  \item[(T2)] \emph{nodes in $K$ are roots:} if $v \in V \cap K$ then
    $v$ is a root of $G$,
  \item[(T3)] \emph{nodes can be removed only once:} it is not possible to have
    $p \agg q$ and $p' \rightarrow q$ with $p \neq p'$, or to have
    both $p \red q$ and $p \agg q$,
  \item[(T4)] \emph{we have all and only the equations in $E$:} we
    have $v \agg X$ or $X \red v$ if and only if the equation
    $v = \sum_{v_i \in X} v_i$ is in $E$.
  \end{description}
\end{definition}

Given a relation $(N_1, m_1) \vartriangleright_E (N_2, m_2)$, the
well-formedness conditions are enough to ensure the unicity of a TFG (up-to the
choice of constant nodes) when we set each equation to be either in $A$ or in
$R$. In this case, we denote this TFG $\tfg{E}$. In practice, we use a tool
called Reduce to generate the $E$-equivalence from the initial net $(N_1, m_1)$.
This tool outputs a sequence of equations suitable to build a TFG and, for each
equation, it adds a tag indicating if it is a Redundancy or an
Agglomeration. We display in Fig.~\ref{fig:HuConstruction_TFG} the equations
generated by Reduce for the net $M_1$ given in Fig.~\ref{fig:stahl}.

\begin{figure}[tbp]
  \centering
  \begin{minipage}{0.2\linewidth}
    {\normalsize
\begin{verbatim}
# R |- p5 = p4
# A |- a1 = p2 + p1
# A |- a2 = p4 + p3
# R |- a1 = a2
\end{verbatim}}
    \end{minipage}\hspace{4em}
    \begin{minipage}[c]{0.45\linewidth}
      \def\scale{0.8} \def\scalenodes{1.0}
      \begin{tikzpicture}[glob-options]
        \node[tnode](p0) at (40.0, 50.0) {\large $p_0$};
        \node[tnode](p6) at (240.0, 50.0) {\large $p_6$};
        \node[tnode](a2) at (150.0, 50.0) {\large $a_2$};
        \node[tnode](a1) at (90.0, 80.0) {\large $a_1$};
        \node[tnode](p3) at (170.0, 100.0) {\large $p_3$};
        \node[tnode](p4) at (210.0, 80.0) {\large $p_4$};
        \node[tnode](p1) at (50.0, 130.0) {\large $p_1$};
        \node[tnode](p2) at (130.0, 130.0) {\large $p_2$};
        \node[tnode](p5) at (210.0, 130.0) {\large $p_5$};
        \draw[tred](a2) -- (a1); \draw[tred](p4) -- (p5);
        \draw[tagg](a2) -- (p3); \draw[tagg](a2) -- (p4);
        \draw[tagg](a1) -- (p1); \draw[tagg](a1) -- (p2);
      \end{tikzpicture}
    \end{minipage}          
    \caption{Equations generated from net $M_1$, in
      Fig.\ref{fig:stahl}, and associated TFG $\tfg{E_M}$}
    \label{fig:HuConstruction_TFG}
  \end{figure}
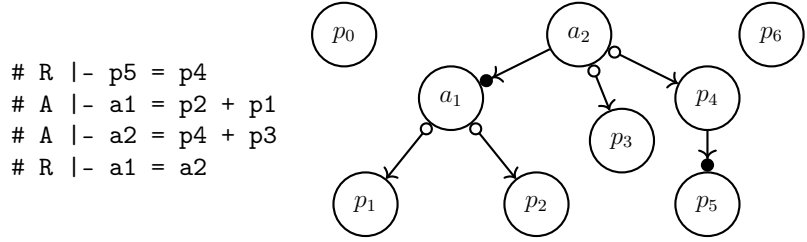
  
A consequence of condition (T3) is that a well-formed TFG is
necessarily {acyclic}; once a place has been removed, it cannot be
used to remove a place later. Moreover, in the case of reductions
generated from structural reductions, the {roots} of the graph are
exactly the constant nodes and the places that occur in $N_2$ (since
they are not removed by any equation). The constraints (T1)--(T4) are
not artificial or arbitrary. In practice, we compute $E$-equivalences
using multiple steps of structural reductions, and a TFG exactly
records the constraints and information generated during these
reductions. In some sense, equations $E$ abstract a relation between
the semantics of two nets, whereas a TFG records the structure of
reductions between places during reductions.

\paragraph*{Configurations of a Token Flow Graph.}
By construction, there is a strong connection between ``systems of
reduction equations'', $E$, and their associated graph, $\tfg{E}$. We
show that a similar relation exists between solutions of $E$ and
``valuations'' of the graph (what we call \emph{configurations}
thereafter).

A \textit{configuration} $c$ of a TFG $(V, R, A)$ is a partial
function from $V$ to $\Nat$. We use the notation $c(v)= \bot$ when $c$
is not defined on $v$ and we always assume that $c(v) = n$ when $v$ is a
constant node in $K(n)$.

Configuration $c$ is \textit{total} when $c(v)$ is defined for all
nodes $v$ in $V$; otherwise it is said \textit{partial}.  We use the
notation $c_{\mid N}$ for the configuration obtained from $c$ by
restricting its support to the set of places in the net $N$. We remark
that when $c$ is defined over all places of $N$ then $c_{\mid N}$ can
be viewed as a marking. By association with markings, we say that two
configurations $c$ and $c'$ are \textit{compatible}, denoted
$c \equiv c'$, if they have same value on the nodes where they are
both defined: $c(p) = c'(p)$ when $c(v) \neq \bot$ and
$c'(v) \neq \bot$.  We also use $\maseq{c}$ to represent the
system $v_1 = c(v_1) \comma \dots \comma v_k = c(v_k)$ where the $(v_i)_{i \in
  1..k}$ are the nodes such
that $c(v_i) \neq \bot$.
We say that a configuration $c$ is \emph{well-defined} when the
valuation of the nodes agrees with the equations associated with the
$A$ and $R$ arcs of $\tfg{E}$.
\begin{definition}[Well-Defined Configurations]
  Configuration $c$ is well-defined when for all nodes $p$ the
  following two conditions hold: \emph{\textbf{(CBot)}} if
  $v \rightarrow w$ then $c(v) = \bot$ if and only if $c(w) = \bot$;
  and \emph{\textbf{(CEq)}} if $c(v) \neq \bot$ and $v \agg X$ or
  $X \red v$ then $c(v) = \sum_{v_i \in X} c(v_i)$.
\end{definition}

We prove that the well-defined configurations of a TFG $\tfg{E}$ are
partial solutions of $E$, and reciprocally. Therefore, because all
the variables in $E$ are nodes in the TFG (condition T1) we have an
equivalence between solutions of $E$ and total, well-defined
configurations of $\tfg{E}$.

\begin{lemma}[Well-defined Configurations are
  Solutions]\label{lemma:configuration_satisfiability}
  Assume $\tfg{E}$ is a well-formed TFG for the equivalence
  $(N_1, m_1) \vartriangleright_E (N_2, m_2)$. If $c$ is a
  well-defined configuration of $\tfg{E}$ then $E\comma \maseq{c}$ is
  consistent. Conversely, if $c$ is a total configuration of $\tfg{E}$
  such that $E\comma \maseq{c}$ is consistent then $c$ is also
  well-defined.
\end{lemma}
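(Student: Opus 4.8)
The plan is to treat the two implications separately, with the forward direction (well-defined configurations yield consistency) carrying essentially all of the work. Throughout I will exploit condition (T4), which lets me identify the equations of $E$ with the arc-relations $v \agg X$ and $X \red v$ of $\tfg{E}$: every equation $v = \sum_{v_i \in X} v_i$ of $E$ is one of these relations, and conversely.

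For the first implication, suppose $c$ is well-defined. Since $c$ may be partial, I would first complete it to a total configuration $\hat{c}$ by setting $\hat{c}(v) = c(v)$ whenever $c(v) \neq \bot$ and $\hat{c}(v) = 0$ otherwise; the goal is to show that $\hat{c}$, read off on $\fv{E}$, is a genuine solution of $E$ that agrees with $c$, so that $E \comma \maseq{c}$ is consistent. I would verify that $\hat{c}$ satisfies every equation of $E$ by a case analysis on each arc-relation $v \agg X$ (resp.\ $X \red v$). If $c(v) \neq \bot$, condition (CBot) applied to the arcs $v \to v_i$ (resp.\ $v_i \to v$) forces every $v_i$ to be defined as well, and then (CEq) gives $c(v) = \sum_{v_i \in X} c(v_i)$, i.e.\ the equation already holds for $c$ and hence for $\hat{c}$. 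If instead $c(v) = \bot$, the same use of (CBot) forces every $v_i$ undefined, so the equation reduces to $0 = \sum 0$ under the zero-filling. In both cases the equation is satisfied, and since $\hat{c} \equiv c$ by construction, $E \comma \maseq{c}$ admits $\hat{c}$ as a solution.

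The delicate point in this direction is the legitimacy of the zero-filling, and it rests on constants. A node $v \in K(n)$ always has $c(v) = n \neq \bot$, so no constant node is ever zero-filled; consequently every equation encoding an invariant $\sum_{v_i \in X} v_i = n$ lands in the ``defined'' case above and is handled by (CEq), never forced to the degenerate $0 = \sum 0$ reading. This is exactly what prevents the extension from violating a constant constraint, and it is the step I would write out most carefully.

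The converse is immediate. If $c$ is total then (CBot) holds vacuously, since both endpoints of any arc are defined. For (CEq), consistency of $E \comma \maseq{c}$ provides a solution $s$; the equalities in $\maseq{c}$ pin $s$ to agree with $c$ on every node of $V$. Given any relation $v \agg X$ or $X \red v$, condition (T4) places the equation $v = \sum_{v_i \in X} v_i$ in $E$, so $s$ satisfies it, and transporting through $s = c$ yields $c(v) = \sum_{v_i \in X} c(v_i)$. Hence $c$ is well-defined.
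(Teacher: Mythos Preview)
Your proposal is correct. The converse direction is essentially identical to the paper's argument: totality makes (CBot) vacuous, and (T4) together with the solution of $E\comma\maseq{c}$ yields (CEq).

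For the forward direction you take a genuinely different route. The paper partitions the equations of $E$ into those whose head variable is defined by $c$ and those whose head is undefined; it then observes (via (CBot)) that these two subsystems share no variables, uses (CEq) to solve the first subsystem, and---here is the difference---appeals to the $E$-equivalence hypothesis to argue that $E$ (and hence the undefined subsystem) is consistent at all, because $E\comma\maseq{m_1}$ is. You instead construct an explicit witness by zero-filling: since every equation has the shape $v = \sum_{v_i\in X} v_i$ and (CBot) forces all its variables to be simultaneously defined or simultaneously undefined, the undefined equations collapse to $0 = \sum 0$. Your remark about constant nodes is exactly the point that makes this legitimate: a node in $K(n)$ is never $\bot$, so any equation touching it lands in the defined case and is handled by (CEq). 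The upshot is that your argument is slightly more elementary and in fact shows that the first implication does not require the $E$-equivalence hypothesis at all---well-formedness of the TFG and well-definedness of $c$ suffice---whereas the paper's proof uses that hypothesis to secure a solution on the undefined part.
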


We can prove several properties related to how the structure of a TFG
constrains possible values in well-formed configurations. These
results can be thought of as the equivalent of a ``token game'', which
explains how tokens can propagate along the arcs of a TFG. This is
useful in our context since we can assess that two nodes are
concurrent when we can mark them in the same configuration. (A similar
result holds for finding pairs of nonconcurrent nodes.)

Our first result shows that we can always propagate tokens from a node
to its children, meaning that if a node has a token, we can find one
in its \emph{successors} (possibly in a different well-defined
configuration). In the following, we use the notation $\succs{v}$ for
the set of successors of $v$, meaning:
$\succs{p} \defeq \bigcup\, \{ q \in V \,\mid\, p \rightarrow^\star q
\}$.
Property (Backward) states a dual result; if a child node is marked
then one of its parents must be marked.

\begin{lemma}[Token Propagation]\label{lemma:forward_propagation}
  Assume $\tfg{E}$ is a well-formed TFG for the equivalence
  $(N_1, m_1) \vartriangleright_E (N_2, m_2)$ and $c$ a well-defined
  configuration of $\tfg{E}$.\\[-1.5em]
  \begin{description}
  \item[(Forward)] if $p, q$ are nodes such that $c(p) \neq \bot$ and
  $p \rightarrow^\star q$ then we can find a well-defined
    configuration $c'$ such that $c'(q) \geqslant c'(p) = c(p)$ and
    $c'(v) = c(v)$ for every node $v$ not in $\succs{p}$.
  
  \item[(Backward)] if $c(p) > 0$ then there is a root $v$ such that
    $v \rightarrow^\star p$ and $c(v) > 0$.

  \item[(Agglomeration)] if $p \agg \{ q_1, \dots, q_k \}$ and
    $c(p) \neq \bot$ then for every sequence $(l_i)_{i \in 1..k}$ of
    $\Nat^k$, if $c(p) = \sum_{i \in 1..k} l_i$ then we can find a
    well-defined configuration $c'$ such that $c'(p) = c(p)$, and
    $c'(q_i) = l_i$ for all $i \in 1..k$, and $c'(v) = c(v)$ for every
    node $v$ not in $\succs{p}$.
  \end{description}
\end{lemma}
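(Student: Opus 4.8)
The plan is to prove the three properties in the order (Agglomeration), (Forward), (Backward), since the first is the real engine and the other two follow from it or by a short induction. Throughout I will use two standing observations. First, since $c(p) \neq \bot$ and every node of $\succs{p}$ is weakly connected to $p$, condition (CBot) forces $c$ to be defined on all of $\succs{p}$; hence any $c'$ I build that agrees with $c$ outside $\succs{p}$ and is total on $\succs{p}$ has exactly the same support as $c$, so (CBot) is automatic and only (CEq) needs checking. Second, by (T3) an agglomeration target has in-degree one: if $r \agg s$ then $s$ has no other incoming arc, so whenever $s \in \succs{p}$ with $s \neq p$, any path witnessing $p \rightarrow^\star s$ must end with that very arc, forcing its source $r$ into $\succs{p}$ as well; symmetrically, a whole agglomeration group $v \agg Y$ lies entirely inside $\succs{p}$ as soon as $v$ does.

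For (Agglomeration) I would construct $c'$ by a single sweep of $\succs{p}$ in topological order, leaving $c' = c$ outside. Set $c'(p) = c(p)$ and, for the direct group $p \agg \{q_1, \dots, q_k\}$, set $c'(q_i) = l_i$ (legitimate since $\sum_i l_i = c(p) = \sum_i c(q_i)$ by (CEq)). When the sweep reaches any further node $u \in \succs{p}$, its in-arcs are either a single agglomeration arc---in which case $c'(u)$ was already fixed when I distributed its parent's value---or a family $X \red u$ of redundancy arcs, in which case I put $c'(u) = \sum_{x \in X} c'(x)$, every parent value being already known (earlier in the order if the parent lies in $\succs{p}$, equal to $c$ otherwise). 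Whenever the current node $u$ is itself an agglomeration source $u \agg Y$, I distribute $c'(u)$ over $Y$ by any non-negative split (e.g. all of it onto one child); this is the only place freedom is used, and only the direct group is pinned to the prescribed $(l_i)$. I expect the main work to be the verification of (CEq) for every equation, split by whether its head node lies in $\succs{p}$: a redundancy equation $X \red v$ with $v$ outside $\succs{p}$ cannot have any source inside (else $v$ would be reachable from $p$), so it is untouched; a head inside is satisfied by construction; an agglomeration equation $v \agg Y$ with $v$ inside is satisfied by the chosen split, while one with $v$ outside can only reach into $\succs{p}$ through the node $p$ itself (by the in-degree-one fact), whose value I keep equal to $c(p)$, so that equation is preserved as well. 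This case analysis is the crux of the whole lemma.

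For (Forward) I would induct on the length of a path $p = v_0 \rightarrow \dots \rightarrow v_m = q$. The base case $m = 0$ takes $c' = c$. For the step, the induction hypothesis on $p \rightarrow^\star v_{m-1}$ yields $c''$ with $c''(v_{m-1}) \ge c''(p) = c(p)$ and $c'' = c$ outside $\succs{p}$. If the last arc $v_{m-1} \rightarrow q$ is a redundancy arc, then $q$'s value is a sum of sources containing $v_{m-1}$, so already $c''(q) \ge c''(v_{m-1}) \ge c(p)$ and $c' = c''$ works. If it is an agglomeration arc $v_{m-1} \agg Y \ni q$, I apply (Agglomeration) at $v_{m-1}$ with the split that sends all of $c''(v_{m-1})$ to $q$; since $\succs{v_{m-1}} \subseteq \succs{p}$ and (as the graph is acyclic) $p \notin \succs{v_{m-1}}$ unless $p = v_{m-1}$, the resulting $c'$ still satisfies $c'(p) = c(p)$, agrees with $c$ outside $\succs{p}$, and gives $c'(q) = c''(v_{m-1}) \ge c(p)$.

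For (Backward) I would argue by well-founded induction on the rank of $p$ (the length of a longest path from a root to $p$), which is legitimate because the TFG is a finite DAG. If $p$ is a root, $v = p$ works. Otherwise $p$ is either a redundancy target, where $c(p) = \sum_{x \in X} c(x) > 0$ forces some source $x$ to have $c(x) > 0$, or an agglomeration target with unique parent $p'$, where $c(p') = \sum_{y \in Y} c(y) \ge c(p) > 0$; in both cases I obtain a parent $u \rightarrow p$ of strictly smaller rank with $c(u) > 0$, apply the induction hypothesis to get a positive root $v$ with $v \rightarrow^\star u$, and conclude $v \rightarrow^\star p$. I do not expect any obstacle here; the only thing to keep in mind is that positivity is genuinely propagated backwards at each step, which is exactly what the two equation shapes guarantee.
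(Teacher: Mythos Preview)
Your proposal is correct and follows essentially the same approach as the paper: you prove (Agglomeration) by a level-wise/topological sweep over $\succs{p}$, derive (Forward) by induction on path length using (Agglomeration) for $A$-arcs and (CEq) directly for $R$-arcs, and prove (Backward) by induction on the distance from a root. The only cosmetic differences are that you peel off the last arc in the Forward induction where the paper peels off the first, and you make the (CEq) case analysis and the in-degree-one consequence of (T3) more explicit than the paper does.
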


Until this point, none of our results rely on the properties of
$E$-equivalence. We now prove that there is an equivalence between
reachable markings and configurations of $\tfg{E}$. More precisely, we
prove (Th.~\ref{th:configuration_reachability}) that every reachable
marking in $N_1$ or $N_2$ can be extended into a well-defined
configuration of $\tfg{E}$. This entails that we can reconstruct all
the reachable markings of $N_1$ by looking at well-defined
configurations obtained from the reachable markings of $N_2$. Our
algorithm (see next section) will be a bit smarter since we do not
need to enumerate exhaustively all the markings of $N_2$. Instead, we
only need to know which roots can be marked together.

\begin{theorem}[Configuration Reachability]\label{th:configuration_reachability}
  Assume $\tfg{E}$ is a well-formed TFG for the equivalence
  $(N_1, m_1) \vartriangleright_E (N_2, m_2)$. If $m$ is a marking in
  $R(N_1, m_1)$ or $R(N_2, m_2)$ then there exists a total,
  well-defined configuration $c$ of $\tfg{E}$ such that $c \equiv
  m$. Conversely, given a total, well-defined configuration $c$ of
  $\tfg{E}$, if marking $c_{\mid N_1}$ is reachable in $(N_1, m_1)$
  then $c_{\mid N_2}$ is reachable in $(N_2, m_2)$.
\end{theorem}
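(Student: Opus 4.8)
The plan is to prove the two implications separately, and in both directions the work is done almost entirely by Lemma~\ref{lemma:configuration_satisfiability} (the bridge between well-defined configurations and solutions of $E$) together with the defining conditions of $E$-equivalence; notably, the token-game results of Lemma~\ref{lemma:forward_propagation} are not needed here. The key observation underlying everything is that a total, well-defined configuration of $\tfg{E}$ is nothing but a solution of $E$ (extended with the forced valuations on constant nodes), so reachability statements about markings can be transported across $\reduc_E$ by checking consistency of the appropriate system.

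For the forward implication, suppose $m \in R(N_1, m_1)$ (the case $m \in R(N_2, m_2)$ is symmetric). First I would invoke condition (A1), which gives that $E \comma \maseq{m}$ is consistent; fix a solution $s$, a total map on $\fv{E} \cup P_1$ that agrees with $m$ on $P_1$. I then extend $s$ into a total configuration $c$ on all of $V$: on the constant nodes I set the forced valuation $c(v) = n$ for $v \in K(n)$, and on the non-constant nodes of $V \setminus K = P_1 \cup P_2 \cup \fv{E}$ not yet in the domain of $s$ I set $c$ to $0$. The point is that any such leftover node lies in $P_2 \setminus (P_1 \cup \fv{E})$ and is therefore \emph{isolated}: by (T4) a node carrying an arc occurs in an equation of $E$ and hence belongs to $\fv{E}$, so the leftover nodes touch no equation and may be valued arbitrarily without disturbing well-definedness. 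By construction $c$ is total, $c \equiv m$ (it coincides with $m$ on $P_1$), and $c$ itself witnesses the consistency of $E \comma \maseq{c}$. The converse half of Lemma~\ref{lemma:configuration_satisfiability} then yields that $c$ is well-defined, which is exactly what is required.

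For the converse implication, let $c$ be a total, well-defined configuration with $c_{\mid N_1} \in R(N_1, m_1)$. The forward half of Lemma~\ref{lemma:configuration_satisfiability} gives that $E \comma \maseq{c}$ is consistent, and since $c$ is total the only solution of this system is $c$ itself; in particular $c$, read as a valuation of the variables, satisfies $E$. Restricting $\maseq{c}$ to the places of each net, $c$ also witnesses the consistency of $E \comma \maseq{c_{\mid N_1}} \comma \maseq{c_{\mid N_2}}$. With $m'_1 \defeq c_{\mid N_1}$ and $m'_2 \defeq c_{\mid N_2}$ this is precisely the hypothesis of condition (A3), so from $m'_1 \in R(N_1, m_1)$ we conclude $m'_2 = c_{\mid N_2} \in R(N_2, m_2)$.

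The main obstacle, and the only part requiring genuine care, is the totality bookkeeping in the forward direction: a solution of $E \comma \maseq{m}$ need not \emph{a priori} assign the places of the \emph{other} net, so I must argue that every node left unassigned is isolated and hence harmless. This is where the well-formedness conditions (T1) and (T4) are essential, and it is also what lets a single argument serve uniformly for both $m \in R(N_1, m_1)$ and $m \in R(N_2, m_2)$. Everything else reduces to matching the shapes of the consistency statements to the premises of (A1), (A3), and Lemma~\ref{lemma:configuration_satisfiability}.
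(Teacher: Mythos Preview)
Your proof is correct, and the converse direction matches the paper's argument essentially verbatim. The forward direction differs in one interesting respect: the paper, instead of working only from (A1), first invokes the $E$-abstraction to obtain a reachable marking $m_2' \in R(N_2, m_2)$ with $E \comma \maseq{m} \comma \maseq{m_2'}$ consistent (this implicitly uses (A3) as well as (A1)), and then any solution of that combined system is automatically defined on $P_1 \cup P_2 \cup \fv{E}$, so totality follows directly from (T1) without needing to talk about isolated nodes. Your route is more economical in that it needs only (A1) plus the structural conditions (T1) and (T4), at the price of the explicit bookkeeping about nodes in $P_2 \setminus (P_1 \cup \fv{E})$ being arc-free; the paper's route hides that bookkeeping inside the appeal to (A3) but gains the bonus that the resulting configuration already restricts to a \emph{reachable} marking of $N_2$, which is more than the theorem asks for.
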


\begin{proof}[Proof (sketch)]
  Take $m$ a marking in $R(N_1, m_1)$. By property of $E$-abstraction,
  there is a reachable marking $m_2'$ in $R(N_2, m_2)$ such that
  $E\comma \maseq{m}\comma \maseq{m_2'}$ is consistent. Therefore we
  can find a non-negative integer solution $c$ to the system
  $E\comma \maseq{m}\comma \maseq{m_2'}$. And $c$ is total because of
  condition (T1).  For the converse property, we assume that $c$ is a
  total and well-defined configuration of $\tfg{E}$ and that
  $c_{\mid N_1}$ is a marking of $R(N_1, m_1)$. By
  Lemma~\ref{lemma:configuration_satisfiability}, since $c$ is
  well-defined, we have that $E\comma \maseq{c}$ is consistent, and
  therefore so is
  $E\comma \maseq{c_{\mid N_1}}\comma \maseq{c_{\mid N_2}}$. This
  entails $c_{\mid N_2}$ in $R(N_2, m_2)$ by condition (A3), as
  needed.
\end{proof}

In the following, we will often consider that nets are safe. This is
not a problem in practice since our reduction rules preserve
safeness. Hence we do not need to check if $(N_2, m_2)$ is safe when
$(N_1, m_1)$ is.
The fact that the nets are safe has consequences. In particular, as a direct
corollary of Th.~\ref{th:configuration_reachability}, we can assume
that, for any well-defined configuration $c$, if $c_{\mid N_2}$ is reachable in
$(N_2, m_2)$ then $c(v) \in \{0, 1\}$.

By Th.~\ref{th:configuration_reachability}, if we take reachable
markings in $N_2$---meaning we fix the values of roots in
$\tfg{E}$---we can find places of $N_1$ that are marked together by
propagating tokens from the roots to the leaves
(Lemma~\ref{lemma:forward_propagation}). In our algorithm, next, we
show that we can compute the concurrency relation of $N_1$ by looking
at just two cases: (1) we start with a token in a single root $p$,
with $p$ not dead, and propagate this token forward until we find a
configuration with two places of $N_1$ marked together; or (2) we do
the same but placing a token in two separate roots, $p_1, p_2$, such
that $p_1 \conc p_2$.  We base our approach on the fact that we can
extend the notion of concurrent places (in a marked net), to the
notion of concurrent nodes in a TFG, meaning nodes that can be marked
together in a ``reachable configuration''.

\section{Dimensionality Reduction Algorithm}
\label{sec:change_dimension_algorithm}

We define an algorithm that takes as inputs a well-formed TFG
$\tfg{E}$ plus the concurrency relation for the net $(N_2, m_2)$, say
$\conc_2$, and outputs the concurrency relation for $(N_1, m_1)$, say
$\conc_1$. Actually, our algorithm computes a \emph{concurrency
  matrix}, $\Conc$, that is a matrix such that $\Conc[v,w] = 1$ when
the nodes $v, w$ can be marked together in a ``reachable
configuration'', and $0$ otherwise.
We prove
(Th.~\ref{th:matrix}) that the relation induced by $\Conc$ matches with
 $\conc_1$ on $N_1$. For the case of ``partial relations'', we use
$\Conc[v,w] = \bullet$ to mean that the relation is undecided. In this case
we say that matrix $\Conc$ is \textit{incomplete}.

The complexity of computing the concurrency relation is highly
dependent on the number of places in the net. For this reason, we say
that our algorithm performs some sort of a ``dimensionality
reduction'', because it allows us to solve a problem in a
high-dimension space (the number of places in $N_1$) by solving it
first on a lower dimension space (since $N_2$ may have far fewer
places) and then transporting back the result to the original net.
In practice, we compute the concurrency relation on $(N_2, m_2)$ using
the tool \caesar\ from the CADP toolbox; but we can rely on any kind
of ``oracle'' to compute this relation for us. This step is not
necessary when the initial net is fully reducible, in which case the
concurrency relation for $N_2$ is trivial and all the roots in
$\tfg{E}$ are constants.

We assume that $\tfg{E}$ is a well-formed TFG for the relation $(N_1, m_1)
\vartriangleright_E (N_2, m_2)$; that both nets are safe; and that all the roots
in $\tfg{E}$ are either constants (in $K(0) \cup K(1)$) or places in $N_2$. We
use symbol $\conc_2$ for the concurrency relation on $(N_2, m_2)$ and $\conc_1$
on $(N_1, m_1)$. To simplify our notations, we assume that $v \conc_2 w$ when
$v$ is a constant node in $K(1)$ and $w$ is not-dead. On the opposite, $v \mathbin{\#}_2 w$ when $v \in
K(0)$ or $w$ is dead.

Our algorithm is divided into two main functions, \ref{alg:change_of_dimension}
and~\ref{alg:token_propagation}. In the main function, \FuncSty{Matrix}, we
iterate over the non-dead roots of $\tfg{E}$ and recursively propagates a
``token'' to its successors (the call to \FuncSty{Propagate} in
line~\ref{line:alive}). After this step, we know all the live nodes
in $\Conc$. The call to \FuncSty{Propagate} has two effects. First, we retrieve
the list of successors of the live roots. Second, as a side-effect, we update
the concurrency matrix $\Conc$ by finding all the concurrent nodes that arise
from a unique root. We can prove all such cases arise from redundancy arcs that
are ``under node $v$''. Actually, we can prove that if $v \to w_1$  and $v \red
w_2$ (with $w_1 \neq w_2$) then the nodes in the set $\succs{v}
\setminus \succs{w_2}$ are concurrent to all the nodes in $\succs{w_2}$. Next,
in the second \KwSty{foreach} loop of \FuncSty{Matrix}, we compute the
concurrent nodes that arise from two distinct live roots $(v, w)$. In this case,
we can prove that all the successors of $v$ are concurrent with successors of
$w$: all the pairs in $\succs{v} \times \succs{w}$ are concurrent.

\begin{function}[t]
  \DontPrintSemicolon 

  \KwResult{the concurrency matrix $\Conc$}

  \BlankLine

  $\Conc \gets \vec{0}$ \tcc*{the matrix is initialized with zeros}
  
  \ForEach{root $v$ in $\tfg{E}$}
  {\If{$v \conc_2 v$}
    {$succs[v] \gets \FuncSty{Propagate}(\tfg{E}, \Conc, v)$}\label{line:alive}}
  
  \ForEach{pair of roots $(v,w)$ in $\tfg{E}$} {\If{$v \conc_2 w$}
    {\lForEach{$(v', w') \in \mathrm{succs}[v] \times \mathrm{succs}[w]$}
      {$\Conc[v', w'] \gets 1$}}}
  
  \KwRet{$\Conc$}

  \caption{Matrix($\tfg{E}$ : TFG, $\conc_2$ : \protect{concurrency relation on $(N_2, m_2)$})}
  \label{alg:change_of_dimension}
\end{function}

\begin{function}[t]
  \DontPrintSemicolon 

  \KwResult{the successors of $v$ in $\tfg{E}$. As a side-effect, we
    add to $\Conc$ all the relations that stem
    from knowing $v$ not-dead.}
  
  \BlankLine

  $\Conc[v,v] \gets 1$\label{line:non_dead}\;
  $\mathrm{succs} \gets \{v\}$ \tcc*{$\mathrm{succs}$ collects  the
    nodes in $\succs{v}$}
  $\mathrm{succr} \gets \{ \}$ \tcc*{auxiliary variable used to store
    $\succs{w}$ when $v \red w$}

  \lForEach{$w$ such that $v \agg w$}%
  {$\mathrm{succs} \gets \mathrm{succs} \cup
    \FuncSty{Propagate}(\tfg{E}, \Conc, w)$}

  \ForEach{$w$ such that $v \red w$}%
  {$\mathrm{succr} \gets \FuncSty{Propagate}(\tfg{E}, \Conc, w)$\;  
  \lForEach{$(v', w') \in \mathrm{succs} \times \mathrm{succr}$}
  {$\Conc[v',w'] \gets 1$}\label{line:redundancy_product}
  $\mathrm{succs} \gets \mathrm{succs} \cup \mathrm{succr}$}
    
  \Return $\mathrm{succs}$
  
  \caption{Propagate($\tfg{E}$ : TFG, $\Conc$ : concurrency matrix, $v$ : node)}
  \label{alg:token_propagation}
\end{function}

We can prove that our algorithm is sound and complete using the theory
that we developed on TFGs and configurations.
 
\begin{theorem}\label{th:matrix}
  If $\Conc$ is the matrix returned by a call to
  $\FuncSty{Matrix}(\tfg{E}, \conc_2)$ then for all places $p, q$ in
  $N_1$ we have $p \conc_1 q$ if and only if either $\Conc[p,q] = 1$
  or $\Conc[q,p] =1$.
\end{theorem}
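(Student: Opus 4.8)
The plan is to prove soundness and completeness of the algorithm by grounding everything in the ``token game'' lemmas established earlier, particularly Lemma~\ref{lemma:forward_propagation} (Token Propagation) and Theorem~\ref{th:configuration_reachability} (Configuration Reachability). The key conceptual bridge is this: two places $p, q$ of $N_1$ are concurrent if and only if there is a well-defined configuration $c$, with $c_{\mid N_2}$ reachable in $(N_2, m_2)$, such that $c(p) > 0$ and $c(q) > 0$. This follows from Theorem~\ref{th:configuration_reachability} together with the safeness corollary (valuations in $\{0,1\}$). So the whole argument reduces to showing that $\Conc[p,q] = 1$ (or $\Conc[q,p]=1$) captures exactly the existence of such a configuration. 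Since all roots are constants or places of $N_2$, fixing which roots are marked together amounts to fixing a reachable marking of $N_2$ via $\conc_2$, and the \FuncSty{Propagate} calls are just carrying out the forward token propagation guaranteed by Lemma~\ref{lemma:forward_propagation}.

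**Key steps.**

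First I would establish a precise invariant for \FuncSty{Propagate}: a call $\FuncSty{Propagate}(\tfg{E}, \Conc, v)$ returns exactly the set $\succs{v}$, and as a side effect sets $\Conc[v',w'] = 1$ for precisely those pairs $v', w' \in \succs{v}$ that can be marked together in some well-defined configuration in which $v$ alone (among roots above them) carries the token. This is proved by induction on the DAG structure (well-founded since the TFG is acyclic by (T3)). The two recursive branches mirror the two arc types: for agglomeration arcs $v \agg w$, property (Agglomeration) lets us push the token from $v$ down into any single child, so the successors simply accumulate and no \emph{new} concurrency within a single agglomeration branch is forced; for redundancy arcs $v \red w$, the marking of $v$ is preserved while $w$ is independently markable, which is exactly why line~\ref{line:redundancy_product} sets all pairs in $\mathrm{succs} \times \mathrm{succr}$ concurrent. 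I would justify this using the structural fact stated in the text: if $v \to w_1$ and $v \red w_2$ with $w_1 \neq w_2$, then nodes in $\succs{v} \setminus \succs{w_2}$ are concurrent to all nodes in $\succs{w_2}$.

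**Assembling the main theorem.**

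With the \FuncSty{Propagate} invariant in hand, the correctness of \FuncSty{Matrix} splits cleanly. For the single-root case, the first \KwSty{foreach} loop calls \FuncSty{Propagate} on each non-dead root (those with $v \conc_2 v$), which by the invariant records all concurrencies arising from a token in one live root. For the two-root case, the second loop uses $v \conc_2 w$: since $v, w$ are roots markable together in $N_2$, property (Forward) applied independently along $\succs{v}$ and $\succs{w}$ shows every pair in $\succs{v} \times \succs{w}$ is simultaneously markable, justifying setting those entries to $1$. For \textbf{completeness}, I would argue that any concurrent pair $p \conc_1 q$ lifts (via Theorem~\ref{th:configuration_reachability}) to a configuration $c$ with $c(p), c(q) > 0$; then property (Backward) locates roots above $p$ and $q$ that are marked in $c$, and these roots are either equal (caught by the single-root case) or distinct and concurrent in $N_2$ (caught by the two-root case). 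For \textbf{soundness}, each entry set to $1$ is witnessed by an explicit well-defined, $N_2$-reachable configuration built by the Forward/Agglomeration constructions, which by Theorem~\ref{th:configuration_reachability} yields a genuine reachable marking of $N_1$.

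**Main obstacle.**

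The hardest part will be the completeness direction for the single-root case: showing that \emph{every} within-one-root concurrency is caught, and that it is caught \emph{only} through redundancy arcs. The subtlety is that a single token at a root can fan out through a mix of agglomerations and redundancies, and I must argue that two distinct successors $p, q$ of the same root are concurrent if and only if their paths from that root diverge at a redundancy node (not merely at an agglomeration, where the token must \emph{choose} one branch and cannot mark both simultaneously). Pinning down this ``divergence at a redundancy'' characterization, and checking that the recursion in \FuncSty{Propagate} enumerates exactly those divergence points via the $\mathrm{succs} \times \mathrm{succr}$ products, is where the real care is needed; the two-root case and the soundness direction are comparatively routine once the \FuncSty{Propagate} invariant is correctly stated.
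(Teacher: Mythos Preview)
Your proposal is correct and follows essentially the same approach as the paper. The paper organizes the argument by first introducing a concurrency relation $\bconc$ on \emph{all} nodes of $\tfg{E}$ (not just places of $N_1$), then factoring out your ``key steps'' into separate auxiliary lemmas: one showing that successors of two concurrent non-nested nodes are pairwise concurrent (your two-root soundness), one showing that a live node with a redundancy arc $v \red w$ makes $(\succs{v}\setminus\succs{w}) \times \succs{w}$ concurrent (your single-root soundness), and one dichotomy lemma stating that any pair $v \bconc w$ with $v \neq w$ arises either from two distinct concurrent roots or from a redundancy divergence below a single live root (your completeness direction, including what you correctly flag as the main obstacle). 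Your invariant-based presentation for \FuncSty{Propagate} is a more algorithm-centric packaging of the same content; the paper's lemma-based decomposition is slightly more modular, but the mathematical substance --- Forward/Agglomeration for soundness, Backward plus the safeness bound for the redundancy-divergence characterization --- is identical.
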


We can perform a cursory analysis of the complexity of our
algorithm. By construction, we update the matrix by following the
edges of $\tfg{E}$, starting from the roots. Since a TFG is a DAG, it
means that we could call function \FuncSty{Propagate} several times on
the same node. However, a call to
$\FuncSty{Propagate}(\tfg{E}, \Conc, v)$ can only update $\Conc$ by
adding a $1$ between nodes that are successors of $v$ (information
only flows in the direction of $\to$). It means that
\FuncSty{Propagate} is idempotent; a subsequent call to
$\FuncSty{Propagate}(\tfg{E}, \Conc, v)$ will never change the values
in $\Conc$. As a consequence, we can safely memoize the result of this
call and we only need to go through a node at most once. More
precisely, we need to call \FuncSty{Propagate} only on the nodes that
are not-dead in $\tfg{E}$. During each call to \FuncSty{Propagate}, we
may update at most $O(N^2)$ values in $\Conc$, where $N$ is the number
of nodes in $\tfg{E}$, which is also $O(|\Conc|)$, the size of our
output. In conclusion, our algorithm has a linear time complexity (in
the number of live nodes) if we count the numbers of function calls
and a linear complexity, in the size of the output, if we count the
number of updates to $\Conc$. This has to be compared with the
complexity of building then checking the state space of the net, which
is PSPACE.

In practice, our algorithm is very efficient, highly parallelizable, and its
execution time is often negligible when compared to the other tasks involved
when computing the concurrency relation. We give some results on our
performances in the next section.

\paragraph*{Extensions to Incomplete Concurrency relations.}
With our approach, we only ever writes 1s into the concurrency matrix
$\Conc$. This is enough since we know relation $\conc_2$ exactly and,
in this case, relation $\conc_1$ must also be complete (we can have
only $0$s or $1$s in $\Conc$). This is made clear by the fact that
$\Conc$ is initialized with ${0}$s everywhere.
We can extend our algorithm to support the case where we only have a
partial knowledge of $\conc_2$. This is achieved by initializing
$\Conc$ with the special value $\bullet$ (undefined) and adding rules that
let us ``propagate $0$s'' on the TFG, in the same way that our total
algorithm only propagates $1$s. For example, we know that if
$\Conc[v,w]=0$ ($v,w$ are nonconcurrent) and $v \agg w'$ (we know that
always $c(v) \geq c(w')$ on reachable configurations) then certainly
$\Conc[w',w] =0$. Likewise, we can prove that following rule for
propagating ``dead nodes'' is sound: if $X \red v$ and $\Conc[w,w]=0$
(node $w$ is dead) for all $w \in X$ then $\Conc[v,v]=0$.

Partial knowledge on the concurrency relation can be useful. Indeed,
many use cases can deal with partial knowledge or only rely on the
nonconcurrency relation (a $0$ on the concurrency matrix). This is the
case, for instance, when computing NUPN partitions, where it is always
safe to replace a $\bullet$ with a $1$. It also means that knowing that
two places are nonconcurrent is often more valuable than knowing that
they are concurrent; $0$s are better than $1$s.

We have implemented an extension of our algorithm for the case of incomplete
matrices using this idea and we report some results obtained with it in the next
section. Unfortunately, we do not have enough space to describe the full
algorithm here. It is slightly more involved than for the complete case and is
based on a collection of six additional axioms:

\begin{itemize}
  \item If $\Conc[v,v] = 0$ then $\Conc[v, w] = 0$ for all node $w$ in
    $\tfg{E}$.
  \item If $v \agg X$ or $X \red v$ and $\Conc[w,w] = 0$ for all nodes $w \in X$
    then $\Conc[v,v] = 0$.
  \item If $v \agg X$ or $X \red v$ and $\Conc[v,v] = 0$ then $\Conc[w,w] = 0$ for
    all nodes $w \in X$.
  \item If $v \agg X$ or $X \red v$ then $\Conc[w,w'] = 0$ for all pairs
    of nodes $w, w' \in X$ such that $w \neq w'$.
  \item If $v \agg X$ or $X \red v$ and $\Conc[w,v'] = 0$ for all
    nodes $w \in X$ then $\Conc[v,v'] = 0$.
  \item If $v \agg X$ or $X \red v$ and $\Conc[v,v'] = 0$ then
    $\Conc[w,v'] = 0$ for all nodes $w$ in $X$.
\end{itemize}

While we can show that the algorithm is
sound, completeness takes a different meaning: we show that when
 nodes $p$ and $q$ are successors of roots $v_1$ and $v_2$ such
that $\Conc[v_i,v_i] \neq \bullet$ for all $i \in 1..2$ then
necessarily $\Conc[p,q] \neq \bullet$.

\section{Experimental Results}
\label{sec:experimental-results}

We have implemented our algorithm in a new tool, called {Kong} (for
{Koncurrent places Grinder}). The tool is open-source, under the GPLv3
license, and is freely available on GitHub
(\url{https://github.com/nicolasAmat/Kong}). We have used the
extensive database of models provided by the Model Checking Contest
(MCC)~\cite{mcc2019,HillahK17} to experiment with our approach. Kong
takes as inputs safe Petri nets defined using the Petri Net Markup
Language (PNML)~\cite{hillah2010pnml}. The tool does not compute net
reductions directly but relies on another tool, called
Reduce,
that is developed inside the Tina
toolbox~\cite{tina2004,tinaToolbox}. For our experiments, we also need
to compute the concurrency matrix of reduced nets.  This is done using
the tool \textsc{c{\ae}sar.bdd} (version 3.4, published in August
2020), that is part of the CADP toolbox~\cite{pbhg2021,cadp}, but we
could adopt any other technology here\footnote{we used version v3.4 of
  \textsc{c{\ae}sar.bdd}, part of CADP version 2020-h "Aalborg",
  published in August 2020.}.

\paragraph*{Benchmarks and Distribution of Reduction Ratios.}
Our benchmark is built from a collection of $588$ instances of safe
Petri nets used in the MCC 2020 competition. Since we rely on how much
reduction we can find in nets, we computed the {reduction ratio}
($r$), obtained using {Reduce}, on all the instances (see
Fig.~\ref{fig:reduction}). The ratio is calculated as the quotient
between how many places can be removed and the number of places in the
initial net. A ratio of $100\%$ ($r = 1$) means that the net is
\emph{fully reduced}; the residual net has no places and all the roots
are constants. We see that there is a surprisingly high number of
models whose size is more than halved with our approach (about $25$\%
of the instances have a ratio $r \ge 0.5$), with approximately half of
the instances that can be reduced by a ratio of $30\%$ or more. We
consider two values for the reduction ratio: one for reductions
leading to a well-formed TFG (in dark blue), the other for the best
possible reduction with Reduce (in light orange).

\begin{figure}[htbp]
	\centering
	\includegraphics[width=\linewidth]{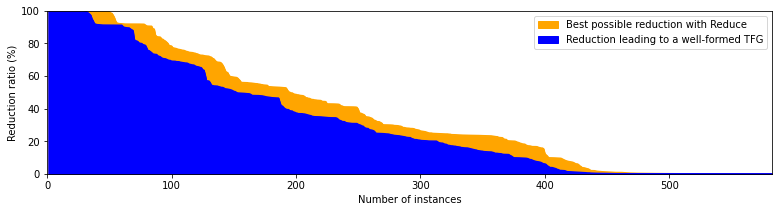}
	\caption{Distribution of reduction ratios over the safe instances in the
	MCC
	\label{fig:reduction}}  
\end{figure}

We observe that we lose few opportunities to reduce a net due to our
well-formedness constraint. Actually, we mostly lose the ability to
simplify some instances of ``partial'' marking graphs that could be
reduced using inhibitor arcs (a feature not supported by \caesar). We
evaluated the performance of Kong on the $424$ instances of safe Petri
nets with a reduction ratio greater than $1\%$. We ran Kong and
\textsc{c{\ae}sar.bdd} on each of those instances, in two main modes:
first with a time limit of $\SI{1}{\hour}$ to compare the number of
totally solved instances (when the tool compute a complete concurrency
matrix); next with a timeout of $\SI{60}{\second}$ to compare the
number of values (the filling ratios) computed in the partial
matrices.
Computation of a partial concurrency matrix with
\textsc{c{\ae}sar.bdd} is done in two phases: first a ``BDD
exploration'' phase that can be stopped by the user; then a
post-processing phase that cannot be stopped. In practice this means
that the execution time is often longer (because of the
post-processing phase) when we do not use Kong: the mean computation
time for \caesar\ alone is about $\SI{62}{\second}$, while it is less
than $\SI{21}{\second}$ when we use Kong and \caesar\ together.
In each test, we compared the output of Kong with the values obtained on the
initial net with \textsc{c{\ae}sar.bdd} and achieved $100\%$ reliability.

\paragraph*{Results for Totally  Computed Matrices.}
We report our results on the computation of complete matrices and a
timeout of $\SI{1}{\hour}$ in the table below.  We report the number
of computed matrices for three different categories of instances,
\emph{Low}/\emph{Fair}/\emph{High}, associated with different ratio
ranges. We observe that we can compute more results with reductions
than without ($+25\%$). As could be expected, the gain is greater on
category \emph{High} ($+53\%$), but it is still significant with the
\emph{Fair} instances ($+32\%$).

\newcolumntype{x}[1]{>{\centering\arraybackslash\hspace{0pt}}p{#1}}
\begin{center}
\resizebox{\textwidth}{!}{
  \begin {tabular}{l@{\quad}l c c c c x{1em} c c c}%
    \toprule
    \multicolumn{2}{c}{\multirow{2}{*}{\begin{minipage}[c]{6em}
          \centering\textsc{Reduction Ratio ($r$)} \end{minipage}}}
    && \multirow{2}{*}{
       \begin{minipage}[c]{6em}
         \centering \textsc{\# Test\\ Cases} \end{minipage}} &&
                                                                \multicolumn{3}{c}{\textsc{\# Computed
                                                                Matrices}}\\\cmidrule(rl){5-10} \multicolumn{2}{c}{} &&&&
                                                                                                                          \textsc{Kong} && {\textsc{\textsc{c{\ae}sar.bdd}}}\\\midrule
    \emph{Low} & $r \in \; ]0, 0.25[$ && $160$ && \ratioc{90}{88} \\
    \emph{Fair} & $r \in  [0.25, 0.5[$ && $112$ && \ratioc{53}{40} \\
    \emph{High} & $r \in  [0.5, 1]$ && $152$ && \ratioc{97}{63} \\\hline\\[-1em]
    {Total} & $r \in \; ]0, 1]$ && $\fpeval{160 + 112 + 152}$
                                                                                                                     && \ratioc{\fpeval{90 + 53 + 97}}{\fpeval{88 + 40 + 63}}\\
    \bottomrule
  \end{tabular}}
\end{center}

To understand the impact of reductions on the computation time, we
compare \textsc{c{\ae}sar.bdd} alone, on the initial net, and Kong +
Reduce + \textsc{c{\ae}sar.bdd} on the reduced net. We display the
result in a scatter plot, using a logarithmic scale
(Fig.~\ref{fig:partial_computations}, left), with one point for each
instance: time using reductions on the $y$-axis, and without on the
$x$-axis. We use colours to differentiate between \emph{Fair}
instances (light orange) and \emph{High} ones (dark blue), and fix a value of
$\SI{3600}{\second}$ when one of the computation timeout. Hence the
cluster of points on the right part of the plots are when \caesar\
alone timeouts.
We observe that the reduction ratio has a clear impact on the
speed-up and that almost all the data points are below the diagonal,
meaning reductions accelerate the computation in almost all cases,
with many test cases exhibiting speeds-up larger than $\times 10$ or
$\times 100$ (materialized by  dashed lines under the
diagonal).

\paragraph*{Results with Partial Matrices.}
We can also compare the ``accuracy'' of our approach when we have
incomplete results. To this end, we compute the concurrency relation
with a timeout of $\SI{60}{\second}$
on 
\textsc{c{\ae}sar.bdd}. We compare the \emph{filling ratio} obtained
with and without reductions. For a net with $n$ places, this ratio is
given by the formula ${2\, |\Conc|} / {(n^2 +n)}$, where $|\Conc|$ is
the number of $0$s and $1$s in the matrix. We display our results
using a scatter plot with linear scale, see
Fig.~\ref{fig:partial_computations} (right). Again, we observe that
almost all the data points are on one side of the diagonal, meaning in
this case that reductions increase the number of computed values, with
many examples (top line of the plot) where we can compute the complete
relation in $\SI{60}{\second}$ only using reductions. The graphic does
not discriminate between the number of $1$s and $0$s, but we obtain
similar good results when we consider the filling ratio for only the
concurrent places (the $1$s) or only the nonconcurrent places (the
$0$s).

\begin{figure}[tb]
  \centering
  {
    \begin{subfigure}[b]{.48\textwidth}
      \centering
      \includegraphics[width=\textwidth]{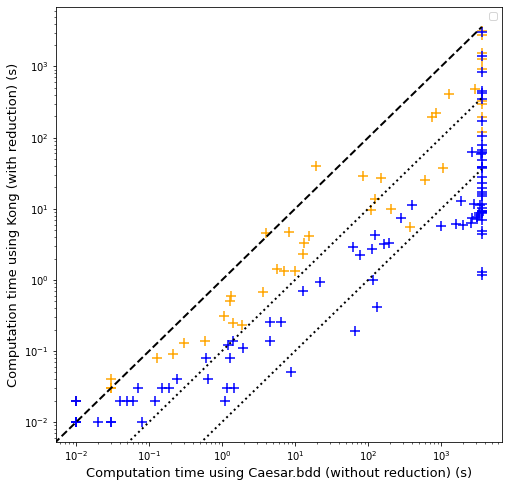}
    \end{subfigure}\hfill
    \begin{subfigure}[b]{.48\textwidth}
      \centering
      \includegraphics[width=\textwidth]{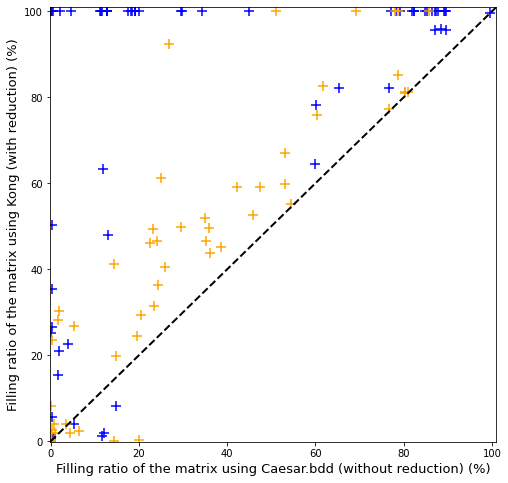}
    \end{subfigure}
  }
\caption{Comparing Kong ($y$-axis) and \textsc{c{\ae}sar.bdd}
  ($x$-axis) for instances with $r \in [0.25, 0.5[$ (light orange) and
  $r \in [0.5, 1]$ (dark blue). One diagram (left) compares the computation
  time for complete matrices; the other (right) compares the filling
  ratio for partial matrices with a timeout of $\SI{60}{\second}$.}
	\label{fig:partial_computations}
\end{figure}

\section{Conclusion and Further Work}
The concurrency problem is difficult, especially when we cannot
compute the complete state space of a net. We propose a method for
transporting this problem from an initial ``high-dimensionality''
domain (the set of places in the net) into a smaller one (the set of
places in the residual net). Our experiments confirm our intuition
that the concurrency relation is much easier to compute after
reductions (if the net can be reduced) and we provide an easy way to
map back the result into the original net.

Our approach is based on a combination of structural reductions with
linear equations first proposed
in~\cite{berthomieu2018petri,berthomieu_counting_2019}. Our main
contribution, in the current work, is the definition of a new
data-structure that precisely captures the structure of these linear
equations, what we call the Token Flow Graph (TFG). We use the TFGs to
accelerate the computation of the concurrency relation, both in the
complete and partial cases.
We have many ideas on how to apply TFGs to other problems and how to
extend them. A natural application would be for model counting (our
original goal in~\cite{berthomieu2018petri}), where the TFG could lead
to new algorithms for counting the number of (integer) solutions in
the systems of linear equations that we manage. Another possible
application is the \emph{max-marking} problem, which means finding the
maximum of the expression $\sum_{p \in P} m(p)$ over all reachable
markings. On safe nets, this amounts to finding the maximal number of
places that can be marked together.  We can easily adapt our algorithm
to compute this value and could even adapt it to compute the result
when the net is not safe.

We can even manage a more general problem, related to the notion of
\emph{max-concurrent} sets of places. We say that the set $S$ is
concurrent if there is a reachable $m$ such that $m(p) > 0$ for all
places $p$ in $S$. (This subsume the case of pairs and singleton of
places.) The set $S$ is
\emph{max-concurrent} if no superset $S'\supsetneq S$ is
concurrent. Computing the max-concurrent sets of a net is interesting
for several reasons. First, it gives an alternative representation of
the concurrency relation that can sometimes be more space efficient:
(1) the max-concurrent sets provide a unique cover of the set of
places of a net, and (2) we have $p \conc q$ if and only if there is
$S$ max-concurrent such that $\{p, q\} \subset S$. Obviously, on safe
nets, the size of the biggest max-concurrent set is the answer to the
\emph{max-marking} problem.

For future work, we would like to answer even more difficult
questions, such as proofs of Generalized Mutual Exclusion
Constraints~\cite{giua1992generalized}, that requires checking
invariants involving a weighted sums over the marking of places, of
the form $\sum_{p \in P} w_p .  m(p)$. Another possible extension
will be to support non-ordinary nets (which would require adding
weights on the arcs of the TFG) and nets that are not safe (which can
already be done with our current approach, but require changing some
of the ``axioms'' used in our algorithm).
Finally, another
interesting direction for works would be to find reductions that
preserve the concurrency relation (but not necessarily reachable
states). As you can see, there is a lot to be done, which underlines
the interest of studying TFGs.

\subsubsection*{Acknowledgements.} We would like to thank Pierre
Bouvier and Hubert Garavel for their insightful suggestions that
helped improve the quality of this paper.


\bibliographystyle{abbrv}
\bibliography{bibfile}

\begin{thebibliography}{10}

\bibitem{tacas}
N.~Amat, B.~Berthomieu, and S.~{Dal Zilio}.
\newblock On the combination of polyhedral abstraction and {SMT}-based model
  checking for {Petri} nets.
\newblock In {\em International Conference on Application and Theory of Petri
  Nets and Concurrency (Petri Nets)}, volume 12734 of {\em {LNCS}}. Springer,
  2021.

\bibitem{mcc2019}
E.~Amparore, B.~Berthomieu, G.~Ciardo, S.~Dal~Zilio, F.~Gall{\`a}, L.~M.
  Hillah, F.~Hulin-Hubard, P.~G. Jensen, L.~Jezequel, F.~Kordon, D.~Le~Botlan,
  T.~Liebke, J.~Meijer, A.~Miner, E.~Paviot-Adet, J.~Srba, Y.~Thierry-Mieg,
  T.~van Dijk, and K.~Wolf.
\newblock Presentation of the 9th edition of the model checking contest.
\newblock In {\em Tools and Algorithms for the Construction and Analysis of
  Systems ({TACAS})}. Springer, 2019.

\bibitem{berthelot_transformations_1987}
G.~Berthelot.
\newblock Transformations and {Decompositions} of {Nets}.
\newblock In {\em Petri Nets: Central Models and their Properties}, {LNCS},
  pages 359--376. Springer, 1987.

\bibitem{berthomieu2018petri}
B.~Berthomieu, D.~Le~Botlan, and S.~Dal~Zilio.
\newblock Petri net reductions for counting markings.
\newblock In {\em International Symposium on Model Checking Software {(SPIN)}},
  volume 10869 of {\em {LNCS}}, pages 65--84. Springer, 2018.

\bibitem{berthomieu_counting_2019}
B.~Berthomieu, D.~Le~Botlan, and S.~Dal~Zilio.
\newblock Counting {Petri} net markings from reduction equations.
\newblock {\em International Journal on Software Tools for Technology
  Transfer}, 2019.

\bibitem{tina2004}
B.~Berthomieu, P.-O. Ribet, and F.~Vernadat.
\newblock The tool {TINA} -- {Construction} of abstract state spaces for
  {Petri} nets and time {Petri} nets.
\newblock {\em International Journal of Production Research},
  42(14):2741--2756, July 2004.

\bibitem{bonneland2019stubborn}
F.~M. B{\o}nneland, J.~Dyhr, P.~G. Jensen, M.~Johannsen, and J.~Srba.
\newblock Stubborn versus structural reductions for petri nets.
\newblock {\em Journal of Logical and Algebraic Methods in Programming},
  102:46--63, 2019.

\bibitem{pbhg2021}
P.~Bouvier and H.~Garavel.
\newblock Efficient algorithms for three reachability problems in safe {Petri}
  nets.
\newblock In {\em International Conference on Application and Theory of Petri
  Nets and Concurrency (Petri Nets)}, volume 12734 of {\em {LNCS}}. Springer,
  2021.

\bibitem{janicki_automatic_2020}
P.~Bouvier, H.~Garavel, and H.~Ponce-de Le{\'o}n.
\newblock Automatic decomposition of {Petri} nets into automata networks -- a
  synthetic account.
\newblock In {\em Application and {Theory} of {Petri} {Nets} and
  {Concurrency}}, volume 12152. Springer, 2020.

\bibitem{garavel_nested-unit_2019}
H.~Garavel.
\newblock Nested-unit {Petri} nets.
\newblock {\em Journal of Logical and Algebraic Methods in Programming},
  104:60--85, Apr. 2019.

\bibitem{garavel2021proposal}
H.~Garavel.
\newblock {Proposal for Adding Useful Features to Petri-Net Model Checkers}.
\newblock Research Report 03087421, {Inria Grenoble - Rh{\^o}ne-Alpes}, Dec.
  2020.

\bibitem{garavel_state_2004}
H.~Garavel and W.~Serwe.
\newblock State {Space} {Reduction} for {Process} {Algebra} {Specifications}.
\newblock In {\em Algebraic {Methodology} and {Software} {Technology}}, {LNCS}.
  Springer, 2004.

\bibitem{giua1992generalized}
A.~Giua, F.~DiCesare, and M.~Silva.
\newblock Generalized mutual exclusion contraints on nets with uncontrollable
  transitions.
\newblock In {\em IEEE International Conference on Systems, Man, and
  Cybernetics}. IEEE, 1992.

\bibitem{HillahK17}
L.~Hillah and F.~Kordon.
\newblock {Petri Nets Repository}: A tool to benchmark and debug {Petri} net
  tools.
\newblock In {\em Application and Theory of Petri Nets and Concurrency}, volume
  10258 of {\em LNCS}. Springer, 2017.

\bibitem{hillah2010pnml}
L.-M. Hillah, F.~Kordon, L.~Petrucci, and N.~Treves.
\newblock {PNML} framework: an extendable reference implementation of the
  {Petri Net Markup Language}.
\newblock In {\em International Conference on Applications and Theory of Petri
  Nets}. Springer, 2010.

\bibitem{hujsa:hal-02992521}
T.~Hujsa, B.~Berthomieu, S.~Dal~Zilio, and D.~Le~Botlan.
\newblock {Checking marking reachability with the state equation in Petri net
  subclasses}.
\newblock 44 pages, Nov. 2020.

\bibitem{cadp}
INRIA.
\newblock {CADP}.
\newblock \url{https://cadp.inria.fr/}, 2020.

\bibitem{janicki_nets_1984}
R.~Janicki.
\newblock Nets, sequential components and concurrency relations.
\newblock {\em Theoretical Computer Science}, 29(1-2), 1984.

\bibitem{kovalyov_polynomial_2000}
A.~Kovalyov.
\newblock A {Polynomial} {Algorithm} to {Compute} the {Concurrency} {Relation}
  of a {Regular} {STG}.
\newblock In {\em Hardware {Design} and {Petri} {Nets}}. Springer, Boston, MA,
  2000.

\bibitem{kovalyov_concurrency_1992}
A.~V. Kovalyov.
\newblock Concurrency relations and the safety problem for {Petri} nets.
\newblock In {\em Application and {Theory} of {Petri} {Nets} 1992}, {LNCS},
  Berlin, Heidelberg, 1992. Springer.

\bibitem{tinaToolbox}
LAAS-CNRS.
\newblock Tina {Toolbox}.
\newblock \url{http://projects.laas.fr/tina}, 2020.

\bibitem{lipton_reduction_1975}
R.~J. Lipton.
\newblock Reduction: a method of proving properties of parallel programs.
\newblock {\em Communications of the ACM}, 18(12), 1975.

\bibitem{murata1989petri}
T.~Murata.
\newblock Petri nets: Properties, analysis and applications.
\newblock {\em Proceedings of the IEEE}, 77(4):541--580, 1989.

\bibitem{semenov_combining_1995}
A.~Semenov and A.~Yakovlev.
\newblock Combining partial orders and symbolic traversal for efficient
  verification of asynchronous circuits.
\newblock In {\em Proceedings of {ASP}-{DAC}'95/{CHDL}'95/{VLSI}'95 with {EDA}
  {Technofair}}, 1995.

\bibitem{silva1996linear}
M.~Silva, E.~Terue, and J.~M. Colom.
\newblock Linear algebraic and linear programming techniques for the analysis
  of place/transition net systems.
\newblock In {\em Advanced Course on Petri Nets}, pages 309--373. Springer,
  1996.

\bibitem{thierry-mieg_structural_2020}
Y.~Thierry-Mieg.
\newblock Structural reductions revisited.
\newblock In {\em Application and Theory of Petri Nets and Concurrency}, volume
  12152 of {\em {LNCS}}, pages 303--323. Springer, 2020.

\bibitem{wisniewski_prototyping_2018}
R.~Wisniewski, A.~Karatkevich, M.~Adamski, A.~Costa, and L.~Gomes.
\newblock Prototyping of {Concurrent} {Control} {Systems} with {Application} of
  {Petri} {Nets} and {Comparability} {Graphs}.
\newblock {\em IEEE Transactions on Control Systems Technology}, 26(2), 2018.

\bibitem{wisniewski2019c}
R.~Wi{\'s}niewski, M.~Wi{\'s}niewska, and M.~Jarnut.
\newblock {C-exact} hypergraphs in concurrency and sequentiality analyses of
  cyber-physical systems specified by safe {Petri} nets.
\newblock {\em IEEE Access}, 7, 2019.

\end{thebibliography}

\newpage
\appendix

\section{Proofs}

\newtheorem*{maintheorem}{Theorem}
\newtheorem*{mainlemma}{Lemma}

\subsection{Proof of Lemma~\ref{lemma:configuration_satisfiability}: Well-defined
Configurations are  Solutions}
  
\begin{mainlemma}
  Assume $\tfg{E}$ is a well-formed TFG for the equivalence
  $(N_1, m_1) \reduc_E (N_2, m_2)$. If $c$ is a
  well-defined configuration of $\tfg{E}$ then $E\comma \maseq{c}$ is
  consistent. Conversely, if $c$ is a total configuration of $\tfg{E}$
  such that $E\comma \maseq{c}$ is consistent then $c$ is also
  well-defined.
\end{mainlemma}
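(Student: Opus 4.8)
The plan is to establish the two implications separately, the workhorses being condition (T4) (equations of $E$ correspond exactly to the relations $v \agg X$ or $X \red v$) together with the two clauses (CBot) and (CEq) that define well-definedness.

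For the first implication I would turn the possibly partial $c$ into an explicit solution. Define a total map $\hat{c}$ on $V$ by $\hat{c}(v) = c(v)$ when $c(v) \neq \bot$ and $\hat{c}(v) = 0$ otherwise. The key preliminary observation, drawn from (CBot), is that along every arc the two endpoints are defined or undefined together; hence the support of $c$ is a union of connected components of the undirected graph underlying $\rightarrow$, and since constant nodes are always defined ($c(v) \neq \bot$ for $v \in K$), no undefined component contains a constant. Now take any equation $v = \sum_{v_i \in X} v_i$ of $E$: by (T4) it stems from a relation $v \agg X$ or $X \red v$, whose head $v$ is joined to each member $v_i$ by an arc, so all of them lie in a single component. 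If that component is defined, (CEq) gives precisely $\hat{c}(v) = c(v) = \sum_{v_i \in X} c(v_i) = \sum_{v_i \in X} \hat{c}(v_i)$; if it is undefined, every one of these nodes is set to $0$ and the equation reads $0 = 0$. Thus $\hat{c}$ satisfies every equation of $E$, and since it agrees with $c$ on all defined nodes it also satisfies $\maseq{c}$; this single assignment witnesses that $E \comma \maseq{c}$ is consistent.

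For the converse, assume $c$ is total and fix a solution $\sigma$ of $E \comma \maseq{c}$. Totality makes (CBot) vacuous, since no node is ever $\bot$ and both sides of the biconditional are false. For (CEq), note that $\maseq{c}$ contains the equation $v = c(v)$ for every node $v$ (as $c$ is total), so $\sigma$ must coincide with $c$ on all of $V$, in particular on $\fv{E} \subseteq V \setminus K$. Since $\sigma$ satisfies $E$ and, by (T4), each relation $v \agg X$ or $X \red v$ is exactly the equation $v = \sum_{v_i \in X} v_i$ of $E$, we obtain $c(v) = \sum_{v_i \in X} c(v_i)$, which is precisely clause (CEq). Hence $c$ is well-defined.

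I expect the forward direction to be the delicate one: because $c$ may be partial, one cannot simply read off a solution and must instead manufacture a total assignment. The crux is that (CBot) forbids any ``half-defined'' equation, which is exactly what makes the zero-fill legitimate, since it keeps defined equations satisfied through (CEq) and guarantees that undefined components carry no constants, so assigning $0$ there is admissible. The converse, by contrast, is essentially bookkeeping once totality collapses (CBot).
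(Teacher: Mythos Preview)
Your argument is correct. The converse direction matches the paper's proof essentially verbatim. For the forward direction, however, you take a genuinely different route: the paper partitions the equations of $E$ into those whose head variable is defined by $c$ and those whose head is undefined, uses (CBot) to argue that these two subsystems share no variables, and then appeals to the $E$-equivalence hypothesis (specifically, that $E\comma \maseq{m_1}$ is consistent) to conclude that the undefined subsystem is satisfiable on its own. You instead build an explicit witness $\hat{c}$ by zero-filling the undefined nodes, relying on the observation that (CBot) forces definedness to be constant on connected components and that constant nodes are always defined, so zero-filling never conflicts with a constant. Your approach is more constructive and, notably, does not actually use the $E$-equivalence hypothesis at all for this direction---only the well-formedness of the TFG---whereas the paper's proof does invoke it. The paper's version is slightly more abstract (no explicit solution is exhibited), but yours makes the role of (CBot) more transparent.
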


\begin{proof}
  We prove each property separately.

  Assume $c$ is a well-defined configuration of $\tfg{E}$. Since $E$
  is a system of reduction equations, it is a sequence of equalities
  $\phi_1, \dots, \phi_k$ where each equation $\phi_i$ has the form
  $x_i = y_1 + \dots + y_{n}$. Also, since $\tfg{E}$ is well-formed we
  have that $X_i \red v_i$ or $x_i \agg X_i$ (only one case is
  possible) with $X_i = \{y_1, \dots, y_{n}\}$ for all indices
  $i \in 1..k$. We define $I$ the subset of indices in $1..k$ such
  that $c(x_i)$ is defined. By condition (CBot) we have
  $c(x_i) \neq \bot$ if and only if $c(v) \neq \bot$ for all
  $v \in X_i$. Therefore, if $c(x_i) \neq \bot$, we have by condition
  (CEq) that $\phi_i\comma \maseq{c}$ is consistent. Moreover the
  values of all the variables in $\phi_i$ are determined by
  $\maseq{c}$ (these variables have the same value in every
  solution). As a consequence, the system combining $\maseq{c}$
  and the $(\phi_i)_{i \in I}$ has a unique solution. On the opposite,
  if $c(x_i) = \bot$ then no variables in $\phi_i$ are defined by
  $\maseq{c}$. Nonetheless, we know that system $E$ is
  consistent. Indeed, by property of $E$-equivalence, we know that
  $E\comma \maseq{m_1}$ has solutions, so it is also the case with
  $E$. Therefore the system combining the equations in
  $(\phi_i)_{i \notin I}$ is consistent. Since this system shares no
  variables with the equations in $(\phi_i)_{i \in I}$, we have that
  $E\comma \maseq{c}$ is consistent.

  For the second case, we assume $c$ total and $E\comma \maseq{c}$
  consistent.  Since $c$ is total, condition (CBot) is true
  ($c(v) \neq \bot$ for all nodes in $\tfg{E}$). Assume we have
  $(N_1, m_1) \reduc_E (N_2, m_2)$. For condition (CEq), we
  rely on the fact that $\tfg{E}$ is well-formed. Indeed, for all
  equations in $E$ we have a corresponding relation $X \red v$ or
  $v \agg X$. Hence $E\comma \maseq{c}$ consistent implies that
  $c(v) = \sum_{w \in X} c(w)$.
\end{proof}


\subsection{Proof of Lemma~\ref{lemma:forward_propagation}: Token
  Propagation}

\begin{mainlemma}
  Assume $\tfg{E}$ is a well-formed TFG for the equivalence
  $(N_1, m_1) \reduc_E (N_2, m_2)$ and $c$ a well-defined
  configuration of $\tfg{E}$.
\end{mainlemma}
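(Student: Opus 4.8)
The plan is to establish the three properties in the order (Agglomeration), (Forward), (Backward), since (Forward) will be derived from (Agglomeration), whereas (Backward) is independent. Throughout, I would lean on two structural facts about a well-formed TFG: it is acyclic (a consequence of (T3)), and, again by (T3), every node that is the target of an agglomeration arc has a \emph{unique} incoming arc. The first fact lets me process $\succs{p}$ in topological order and guarantees $p \notin \succs{v}$ whenever $p \rightarrow^\star v$ with $p \neq v$; the second is what localises any change to a controlled region of the graph.

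For (Agglomeration), let $D \defeq \bigcup_{i \in 1..k} \succs{q_i}$ and note that $p \notin D$ and $D \subseteq \succs{p}$ by acyclicity. I would define $c'$ by keeping $c'(v) = c(v)$ for every $v \notin D$, setting $c'(q_i) = l_i$, and completing $c'$ on the remaining nodes of $D$ in topological order: a redundancy target $r$ (with $X \red r$) receives $c'(r) = \sum_{x \in X} c'(x)$, where summands lying outside $D$ keep their old value; an agglomeration target $r$ (with unique parent $s$, $s \agg Y$, $r \in Y$) receives any non-negative split of the already-computed value $c'(s)$ among the children $Y$, for instance all of $c'(s)$ on one child and $0$ on the others. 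Since $c'(p) = c(p) = \sum_i l_i = \sum_i c'(q_i)$ by hypothesis, the boundary clauses $c'(p) = c(p)$ and $c'(v) = c(v)$ for $v \notin \succs{p}$ hold. The real work — and the step I expect to be the main obstacle — is checking that $c'$ is well-defined, i.e. verifying (CBot) and (CEq) across the interface between $D$ and its complement. Here (T3) is essential: any agglomeration equation whose left-hand side lies outside $D$ but whose summands meet $D$ must be exactly $p \agg \{q_1,\dots,q_k\}$ (because agglomeration targets have a unique parent), and any redundancy equation straddling the boundary has its target inside $D$ (so it is recomputed) with only its extra summands outside $D$ (which are unchanged). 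This case analysis shows that no equation is broken, and (CBot) follows because $c(p) \neq \bot$ forces, via repeated (CBot), every node of $D$ to be defined under both $c$ and $c'$.

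For (Forward) I would induct on the length of the path $p \rightarrow^\star q$; the base case $p = q$ is immediate with $c' = c$. For the step, write the path as $p \rightarrow v_1 \rightarrow^\star q$. If the first arc is a redundancy arc $p \red v_1$, then (CEq) gives $c(v_1) \geqslant c(p)$ directly, so the induction hypothesis applied to $v_1 \rightarrow^\star q$ (with $c$ unchanged) yields the desired $c'$, and acyclicity ($p \notin \succs{v_1}$) preserves $c'(p) = c(p)$. If the first arc is an agglomeration arc $p \agg v_1$, I first invoke (Agglomeration) with the split assigning all of $c(p)$ to $v_1$ and $0$ to its siblings, producing a well-defined $c''$ with $c''(v_1) = c(p)$ that agrees with $c$ outside $\succs{p}$; the induction hypothesis on $v_1 \rightarrow^\star q$ then finishes the argument, using $\succs{v_1} \subseteq \succs{p}$ to transport the ``unchanged outside $\succs{p}$'' clause. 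Finally, (Backward) is a symmetric backward walk: if $c(p) > 0$ and $p$ is a root we are done; otherwise, if $p$ is a redundancy target $X \red p$ then $c(p) = \sum_{x \in X} c(x) > 0$ forces some parent $x$ with $c(x) > 0$, and if $p$ is an agglomeration target $s \agg Y$ then $c(s) = \sum_{y \in Y} c(y) \geqslant c(p) > 0$; in either case we move to a strictly higher ancestor with positive value, and since the graph is finite and acyclic this terminates at a root. Formally this is a well-founded induction on the length of the longest path ending at $p$.
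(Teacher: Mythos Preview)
Your proposal is correct and follows essentially the same strategy as the paper: establish (Agglomeration) first by rebuilding the configuration on the successors of $p$ in a top-down sweep (the paper phrases this via node ``levels'', you via a topological order on $D = \bigcup_i \succs{q_i}$, which is the same idea), then derive (Forward) by induction on the length of the path with a case split on whether the first arc is in $R$ or in $A$, and handle (Backward) by an upward walk using (CEq), formalised as an induction on depth. Your boundary analysis for (Agglomeration) --- using (T3) to argue that the only agglomeration equation straddling $D$ is $p \agg \{q_1,\dots,q_k\}$ itself --- is in fact more explicit than the paper's, which simply asserts that checking $c'$ is well-defined ``is obvious''.
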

\begin{description}
\item[(Agglomeration)] if $p \agg \{ q_1, \dots, q_k \}$ and
  $c(p) \neq \bot$ then for every sequence $(l_1, \dots, l_k)$ in
  $\Nat^k$ such that $c(p) = \sum_{i \in 1..k} l_i$ we can find a
  well-defined configuration $c'$ such that $c'(p) = c(p)$, and
  $c'(q_i) = l_i$ for all $i \in 1..k$, and $c'(v) = c(v)$ for every
  node $v$ not in $\succs{p}$.
\item[(Forward)] for every pair $(p, q)$ of nodes such that $c(p) \neq \bot$
  and $p \rightarrow^\star q$ we can find a well-defined configuration $c'$
  such that $c'(q) \geqslant c'(p) = c(p)$ and $c'(v) = c(v)$ for every node
  $v$ not in $\succs{p}$.
\item[(Backward)] if $c(p) > 0$ then there is a root $v$ such that
  $v \rightarrow^\star p$ and $c(v) > 0$.
\end{description}

\begin{proof}
  We prove each property separately.\\[-1em]
  
  \noindent\textbf{(Agglomeration Propagation)}: we prove that we can update the
  successors of $p$ by following the order induced by the tree-like structure of
  the TFG. To this end, we introduce the notion of level of a node. The level of
  a node $v$ in $\tfg{E}$, denoted $\lvl{v}$, is the length of longest path $r
  \to^\star v$ from a root $r$ of $\tfg{E}$ to $v$. The level can only increase
  when we follow an arc and is always defined since we have no cycles in the
  TFG.

  Take a node $p$ at level $l$ such that $p \agg X$, with
  $X = \{ q_1, \dots, q_k \}$, and a sequence
  $(l_1, \dots, l_k) \in \Nat^k$ such that
  $c(p) = \sum_{i \in 1 ..  k} l_i$.
  We define configuration $c'$ as follows. Take $c'(q_i) = l_i$ for
  all $i \in 1..k$, and $c'(p) = c(p)$, and $c'(v) = c(v)$ for all the
  nodes $v$ such that $v \notin \succs{p}$ or $\lvl{v} \leqslant
  l$. We still need to find suitable values, $c'(w)$, for all the
  nodes $w$ that are successors of the nodes in $X$.  We proceed
  ``levels after levels''. Note that, by construction, we have that
  $\lvl{w} \geq l + 1$. If $c'(w)$ is in the last defined level and
  $w \agg Y$, then $c'$ cannot be already defined over $Y$ (otherwise
  it would mean that these nodes can be removed twice). In this case we
  are free to choose any possible valuation such that
  $\sum_{w'\in Y} c'(w') = c'(w)$ and we continue recursively with the
  successors of $Y$. If $Y \red w$ then all the nodes in $Y$ are in a
  level smaller than $w$ and therefore $c'$ is defined over $Y$. In
  this case we choose $c'(w) = \sum_{w'\in Y} c'(w')$. Since we have a
  finite DAG, this process terminates with $c'$ a total
  configuration. The proof proceeds by showing that $c'$ is a
  well-defined configuration, which is obvious.

  \noindent\textbf{(Forward Propagation)}: take a well-defined
  configuration $c$ of $\tfg{E}$ and assume we have two nodes $p,q$
  such that $c(p) \neq \bot$ and $p \rightarrow^\star q$.
  The proof is
  by induction on the length of the path from $p$ to $q$.
  The initial case is when $p = q$, which is trivial.
  Otherwise, assume $p \rightarrow r \rightarrow^\star q$. It is enough to find
  a well-defined configuration $c'$ such that $c'(r) \geqslant c'(p) = c(p)$.
  Since the nodes not in $\succs{p}$ are not in the paths from $p$ to $q$, we can
  ensure $c'(v) = c(v)$ for any node $v$ not in $\succs{p}$.
  The proof proceeds by a case analysis on $p \rightarrow r$.
  \begin{description}
  \item[(Case R)] assume $p \rightarrow r$ is a $R$-arc, meaning
    $p \red r$. More generally, it follows that $X \red r$ with
    $p \in X$.  Then by (CEq) we have
    $c(r) = c(p) + \sum_{v \in X, v \neq p} c(v) \geqslant c(p)$ and
    we can choose $c'= c$.
    
  \item[(Case A)] in this case we have $p \agg X$ with $r \in X$. By
    (Agglomeration Propagation) we can find a well-defined configuration $c'$
    such that $c'(r) = c'(p) = c(p)$ (and also $c'(v) = 0$ for all $v \in X
    \setminus \{ r\}$).
  \end{description}

  \noindent\textbf{(Backward Propagation)}: take a well-defined configuration
  $c$ of $\tfg{E}$ and assume we have $c(p) > 0$. The proof is by induction on
  the longest possible path from a root to node $p$. We re-use the notion of
  levels introduced previously. The initial case is when $p$ is itself a root,
  $\lvl{p} = 0$, and is trivial. Otherwise there must be at least one predecessor
  node $q$ such that $q \rightarrow p$. Like in the previous proof, we proceed
  by case analysis.
  \begin{description}
  \item[(Case R)] we have $X \red p$ with $X \neq \emptyset$.  By
    (CEq) we have $c(p) = \sum_{v \in X} c(v) > 0$. Hence there
    must be at least one node $q$ in $X$ such that $c(q) > 0$ and
    necessarily $\lvl{p} \geqslant \lvl{q} + 1$.
    
  \item[(Case A)] in this case we have $q \agg X$ with $p \in X$. By
    (CEq) we have
    $c(q) = c(p) + \sum_{v \in X\setminus \{p\}} c(v) \geqslant c(p)$
    as needed, with $\lvl{p} = \lvl{q} + 1$.
  \end{description}
\end{proof}


\subsection{Proof of Theorem~\ref{th:configuration_reachability}:
  Configuration Reachability}

\begin{maintheorem}
  Assume $\tfg{E}$ is a well-formed TFG for the equivalence
  $(N_1, m_1) \reduc_E (N_2, m_2)$. If $m$ is a marking in
  $R(N_1, m_1) \cup R(N_2, m_2)$ then there exists a total,
  well-defined configuration $c$ of $\tfg{E}$ such that $c \equiv
  m$. Conversely, if $c$ is a total, well-defined configuration of
  $\tfg{E}$ then marking $c_{\mid N_1}$ is reachable in $(N_1, m_1)$
  if and only if $c_{\mid N_2}$ is reachable in $(N_2, m_2)$.
\end{maintheorem}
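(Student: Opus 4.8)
The plan is to prove the two directions of this biconditional-style statement separately, leveraging the machinery already established. The theorem has three parts: (1) reachable markings of $N_1$ extend to total well-defined configurations compatible with them; (2) the same for reachable markings of $N_2$; and (3) the converse equivalence. I would handle the forward direction first and note that it is essentially the content of the proof sketch already given earlier in the paper.

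For the forward direction, I would start with $m \in R(N_1, m_1)$. By the defining property of $E$-equivalence (specifically condition A3, together with A1), there exists a reachable marking $m_2'$ in $R(N_2, m_2)$ such that $E\comma \maseq{m}\comma \maseq{m_2'}$ is consistent. Because this system is consistent, I can pick a non-negative integer solution, and I would let $c$ be the configuration reading off the values this solution assigns to every node of $\tfg{E}$. The key point is that $c$ is \emph{total}: condition (T1) guarantees that every node (other than constants) is a place of $N_1$ or $N_2$ or a variable of $E$, so every node receives a value in the solution, and constant nodes in $K(n)$ are fixed to $n$. By Lemma~\ref{lemma:configuration_satisfiability} (second half), since $c$ is total and $E\comma \maseq{c}$ is consistent, $c$ is well-defined. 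Finally $c \equiv m$ holds because $c$ was built to agree with $\maseq{m}$ on the places of $N_1$. The case $m \in R(N_2, m_2)$ is symmetric, using the symmetry of $\reduc_E$ noted after its definition.

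For the converse, I would take a total well-defined configuration $c$ and must show $c_{\mid N_1}$ reachable in $N_1$ iff $c_{\mid N_2}$ reachable in $N_2$. By Lemma~\ref{lemma:configuration_satisfiability} (first half), well-definedness gives that $E\comma \maseq{c}$ is consistent; since $\maseq{c}$ entails both $\maseq{c_{\mid N_1}}$ and $\maseq{c_{\mid N_2}}$, the system $E\comma \maseq{c_{\mid N_1}}\comma \maseq{c_{\mid N_2}}$ is also consistent. This is exactly the hypothesis of condition (A3), which yields $c_{\mid N_1} \in R(N_1, m_1) \iff c_{\mid N_2} \in R(N_2, m_2)$, completing the proof. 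I would note that this is slightly stronger than the proof sketch given earlier, which only argued one direction of the converse; here the biconditional falls out directly from (A3) without needing to treat the directions asymmetrically.

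The step I expect to require the most care is confirming totality of $c$ in the forward direction, since it is the linchpin that lets me invoke the converse half of Lemma~\ref{lemma:configuration_satisfiability}. The subtlety is that a solution to $E\comma \maseq{m}\comma \maseq{m_2'}$ a priori only constrains the variables appearing in those equations, so I must argue via (T1) that these variables exhaust all nodes of the TFG (excluding constants, which are valued by fiat). Everything else is a routine chaining of the $E$-equivalence conditions with Lemma~\ref{lemma:configuration_satisfiability}, and no new combinatorial argument about the graph structure is needed for this particular result.
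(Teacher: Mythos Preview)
Your proposal is correct and follows essentially the same route as the paper's proof: invoke $E$-equivalence to obtain a compatible reachable marking on the other side, extract a solution that is total by (T1), and for the converse use Lemma~\ref{lemma:configuration_satisfiability} together with (A3). Your write-up is in fact slightly more explicit than the paper's on two points---you spell out the appeal to the second half of Lemma~\ref{lemma:configuration_satisfiability} to conclude well-definedness in the forward direction, and you observe that (A3) already delivers the full biconditional in the converse (the paper's appendix proof argues only one implication)---but these are refinements of presentation, not a different strategy.
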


\begin{proof}
  Take $m$ a marking in $R(N_1, m_1)$. The other case is totally
  symmetric. By property of $E$-abstraction, there exists a reachable
  marking $m_2'$ in $R(N_2, m_2)$ such that
  $E\comma \maseq{m}\comma \maseq{m_2'}$ is consistent. Therefore we can
  find a non-negative integer solution $c$ to the system
  $E\comma \maseq{m}\comma \maseq{m_2'}$, meaning a valuation for all
  the variables and places in $\fv{E}, N_1$ and $N_2$ such that
  $E\comma \maseq{c}$ is consistent and $c(p) = m(p)$ if $p \in N_1$ and
  $c(p) = m'_2(p)$ if $p \in N_2$. Because of condition (T1), this
  solution is total over all the nodes of $\tfg{E}$ (the only other
  possible case is for constants, whose values are fixed).

  For the converse property, we assume that $c$ is a total and
  well-defined configuration of $\tfg{E}$ and that $c_{\mid N_1}$ is a
  marking of $R(N_1, m_1)$.
  Since $c$ is a well-defined configuration, from
  Lemma~\ref{lemma:configuration_satisfiability} we have that
  $E\comma \maseq{c}$ is consistent. Therefore we have that
  $E\comma \maseq{c_{\mid N_1}}\comma \maseq{c_{\mid N_2}}$ is
  consistent. By definition of the $E$-abstraction, condition (A3), we
  have $c_{\mid N_2}$ in $R(N_2, m_2)$, as needed.
\end{proof}


\subsection{Safe Configurations}

For the sake of simplicity, we can assume that all the leaf nodes in
$\tfg{E}$ are places in $N_1$. This is true for TFGs computed from
structural reductions and this will simplify our proofs: for every
node $v$ we can always assume that there is $p$ in $N_1$ such that
$v \to^\star p$.

In our proof, we also implicitly assume that all the constants in $E$
are either $0$ or $1$. We could relax this last constraint, but this
would needlessly complicate our algorithm.

\begin{lemma}[Safe Configurations]\label{lemma:safe_configurations}
  Assume $\tfg{E}$ is a well-formed TFG for
  $(N_1, m_1) \reduc_E (N_2, m_2)$ with $(N_1, m_1)$ and
  $(N_2, m_2)$ safe Petri nets. Then for every total, well-defined
  configuration $c$ of $\tfg{E}$ such that $c_{\mid N_2}$ reachable in
  $(N_2,m_2)$, and every node $v$, we have $c(v) \in \{0, 1\}$.
\end{lemma}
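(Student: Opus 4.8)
The plan is to bound each node $v$ \emph{uniformly}, by transporting its value forward to a leaf place of $N_1$, where safeness of $N_1$ gives an immediate bound of $1$. Fix a total, well-defined configuration $c$ with $c_{\mid N_2}$ reachable, and fix an arbitrary node $v$. Since $c$ is total we have $c(v) \neq \bot$, and by the standing assumption of this subsection there is a leaf $p$ of $N_1$ with $v \rightarrow^\star p$. First I would apply property \textbf{(Forward)} of Lemma~\ref{lemma:forward_propagation} with source $v$ and target $p$: this produces a well-defined configuration $c'$ such that $c'(p) \geqslant c'(v) = c(v)$ and $c'(u) = c(u)$ for every node $u \notin \succs{v}$. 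Note that $c'$ is again total, since it agrees with the total configuration $c$ off $\succs{v}$ and the construction underlying \textbf{(Forward)} defines it everywhere on $\succs{v}$.

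The key step is to verify that $c'_{\mid N_2}$ is \emph{still} reachable, so that the converse direction of Theorem~\ref{th:configuration_reachability} applies to $c'$. Here I rely on the structural fact that, for TFGs arising from structural reductions, every place of $N_2$ is a root of $\tfg{E}$. The only nodes whose value \textbf{(Forward)} may have altered lie in $\succs{v}$; but a root $r$ with $r \neq v$ can never belong to $\succs{v}$, since $r \in \succs{v}$ would mean $v \rightarrow^\star r$ along a nonempty path, making $r$ the target of an arc and contradicting that $r$ is a root. Hence $c'$ and $c$ coincide on every place of $N_2$ distinct from $v$, while on $v$ itself (in case $v$ happens to be an $N_2$-place) we still have $c'(v) = c(v)$ by \textbf{(Forward)}. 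Therefore $c'_{\mid N_2} = c_{\mid N_2}$, which is reachable in $(N_2, m_2)$ by hypothesis.

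Finally, since $c'$ is total and well-defined and $c'_{\mid N_2}$ is reachable, Theorem~\ref{th:configuration_reachability} yields that $c'_{\mid N_1}$ is reachable in $(N_1, m_1)$. As $N_1$ is safe and $p$ is a place of $N_1$, this forces $c'(p) \leqslant 1$, whence $c(v) = c'(v) \leqslant c'(p) \leqslant 1$; combined with $c(v) \in \Nat$ we conclude $c(v) \in \{0,1\}$. Constant nodes are already in $\{0,1\}$ by the standing assumption on $E$, so they require no separate argument. I expect the delicate point to be the middle paragraph: one must argue precisely that the perturbation introduced by \textbf{(Forward)} stays strictly below $v$ in the DAG and therefore leaves the marking of $N_2$ — and hence its reachability — untouched. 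Once that is in place, the result is just a chaining of Lemma~\ref{lemma:forward_propagation}, Theorem~\ref{th:configuration_reachability}, and the safeness hypothesis.
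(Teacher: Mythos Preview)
Your proof is correct and follows essentially the same approach as the paper: push the value of $v$ forward to a leaf place $p$ of $N_1$ via Lemma~\ref{lemma:forward_propagation}, observe that $c'_{\mid N_2}$ is unchanged (hence still reachable) because the perturbation lies in $\succs{v}$ while $N_2$-places are roots, and then invoke Theorem~\ref{th:configuration_reachability} together with safeness of $N_1$ to bound $c(v)$. The only cosmetic difference is that the paper phrases it as a proof by contradiction (assume $c(v) > 1$), whereas you give the equivalent direct argument and spell out more carefully why $c'_{\mid N_2} = c_{\mid N_2}$.
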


\begin{proof}
  We prove the result by contradiction. Take a total and well-defined
  configuration $c$ such that $c_{\mid N_2}$ is reachable in
  $(N_2, m_2)$ and a node $v$ such that $c(v) > 1$ and
  $v \to^\star p$, with $p$ a place of $N_1$.
  By Lemma~\ref{lemma:forward_propagation}, we can find a well-defined
  configuration $c'$ of $\tfg{E}$ such that
  $c'(p) \geqslant c'(v) = c(v)$ and $c'(w) = c(w)$ for every node $w$
  not in $\succs{v}$.
  Therefore $c'_{\mid N_2}$ is also reachable in $(N_2, m_2)$.
  This contradicts the fact that the nets are safe since, by
  Th.~\ref*{th:configuration_reachability}, we would have a reachable
  marking that is not $1$-bounded.
\end{proof}


\subsection{Checking Dead  Places using Configurations}

By our configuration reachability theorem
(Th.~\ref{th:configuration_reachability}), if we take reachable
markings in $N_2$---meaning we fix the values of roots in
$\tfg{E}$---we can find places of $N_1$ that are marked together by
propagating tokens from the roots to the leaves
(Lemma~\ref{lemma:forward_propagation}). We prove that we can compute
the concurrency relation of $N_1$ by looking at just two cases: (1) we
start with a token in a single root $p$, with $p$ not dead, and
propagate this token forward until we find a configuration with two
places of $N_1$ marked together; or (2) we do the same but placing a
token in two separate roots, $p_1, p_2$, such that $p_1 \conc p_2$.
We base our approach on the fact that we can extend the notion of
concurrent places (in a marked net), to the notion of concurrent nodes
in a TFG. Those are the nodes that can be marked together in a
``reachable configuration''.

\begin{definition}[Concurrent Nodes]\label{def:bconc}
  The concurrency relation of $\tfg{E}$, denoted $\bconc$, is the
  relation between pairs of nodes in $\tfg{E}$ such that $v \bconc w$
  if and only if there is a total, well-defined configuration $c$
  where: (1) $c$ is reachable, meaning $c_{\mid N_2} \in R(N_2, m_2)$;
  and (2) $c(v) > 0$ and $c(w) > 0$.
\end{definition}

Like with the concurrency relation on nets, we have that $\bconc$ is
symmetric and $v \bconc v$ means that $v$ is not-dead (there is a
valuation with $c(v) > 0$). We can also extend this relation to define
a notion of max-concurrent sets of nodes.

By definition, if $p, q$ are places in $N_2$ then $p \bconc q$ only if
$p \conc q$ in $(N_2, m_2)$. We say in this case that $p, q$ are
\emph{concurrent roots}. We can extend this notion to constants. We
say that two roots $v_1, v_2$ are concurrent when $v_1 \bconc v_2$ and
that root $v_1$ is not-dead when $v_1 \bconc v_1$. This include cases
where $v_1$ or $v_2$ are in $K(1)$ (they are constants with value
$1$).

Since the places of $N_1$ are nodes in $\tfg{E}$, we also have that
$p \bconc q$ if and only if $p \conc q$ in $(N_1, m_1)$. This is the
relation we use in our algorithm of
Sect.~\ref{sec:change_dimension_algorithm}.

We prove some properties about the relation $\bconc$ that are direct
corollaries of our token propagation properties.
For all the following results, we implicitly assume that $\tfg{E}$ is a
well-formed TFG for the relation
$(N_1, m_1) \reduc_E (N_2, m_2)$, that both marked nets are
safe, that all the roots in $\tfg{E}$ are either constants or
places in $N_2$; and that $\bconc$ is the concurrency relation of $\tfg{E}$.

We start with a property (Lemma~\ref{lemma:non_dead_propagation})
stating that the successors of a ``live node'' must also be
not-dead. Lemma~\ref{lemma:non_dead_places_completeness} provides a dual
result, useful to prove the completeness of our approach; it states
that it is enough to explore the live roots to find all the live
nodes.

\begin{lemma}[Propagation of Live
  Nodes]\label{lemma:non_dead_propagation}
  If $v \bconc v$ and $v \to^\star w$ then $w \bconc w$.
\end{lemma}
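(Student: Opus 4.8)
The plan is to prove Lemma~\ref{lemma:non_dead_propagation} as a direct consequence of the (Forward) property of Lemma~\ref{lemma:forward_propagation}, using the definition of $\bconc$ (Def.~\ref{def:bconc}). Recall that $v \bconc v$ means precisely that there is a total, well-defined configuration $c$ with $c_{\mid N_2} \in R(N_2, m_2)$ and $c(v) > 0$. My goal is to produce, starting from such a witnessing configuration for $v$, a new witnessing configuration showing $w \bconc w$.

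First I would unfold the hypothesis $v \bconc v$ to obtain a total, well-defined configuration $c$ such that $c_{\mid N_2}$ is reachable in $(N_2, m_2)$ and $c(v) > 0$. In particular $c(v) \neq \bot$, so the hypotheses of (Forward) are met for the pair $(v, w)$ since $v \to^\star w$. Applying (Forward), I obtain a well-defined configuration $c'$ with $c'(w) \geqslant c'(v) = c(v) > 0$ and $c'(u) = c(u)$ for every node $u$ not in $\succs{v}$. This immediately gives $c'(w) > 0$, which is the ``token on $w$'' half of what I need.

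The key observation that makes the rest work is that $c'$ still projects to a reachable marking of $N_2$. Here I would invoke the standing assumption that all roots of $\tfg{E}$ are constants or places of $N_2$. Since (Forward) only modifies values strictly ``below'' $v$ — formally, $c'$ agrees with $c$ outside $\succs{v}$ — and roots are never proper successors of another node, no root value is changed; hence $c'_{\mid N_2}$ differs from $c_{\mid N_2}$ only possibly on non-root places of $N_2$ that happen to lie in $\succs{v}$. The cleanest way to conclude $c'_{\mid N_2} \in R(N_2, m_2)$ is to note that $c'$ is total and well-defined, and to apply Theorem~\ref{th:configuration_reachability}: it suffices that the restriction to the roots (equivalently, the reachability status of $c'_{\mid N_2}$) is unchanged, mirroring the argument already used in the proof of Lemma~\ref{lemma:safe_configurations}. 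With $c'_{\mid N_2}$ reachable and $c'(w) > 0$, the configuration $c'$ witnesses $w \bconc w$ by Definition~\ref{def:bconc}, completing the proof.

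The main obstacle — and the only point requiring care — is justifying that reachability of the $N_2$-projection is preserved when passing from $c$ to $c'$. The danger is that (Forward) might alter the marking on some place of $N_2$ that is not a root but still lies in $\succs{v}$, which could in principle move $c'_{\mid N_2}$ outside $R(N_2, m_2)$. I expect to dispatch this by exactly the reasoning in Lemma~\ref{lemma:safe_configurations}: since roots (the only $N_2$-places that matter under the standing assumption that every root is a constant or a place of $N_2$, and that these are the sources of token flow) are untouched, and $c'$ is total and well-defined, Theorem~\ref{th:configuration_reachability} guarantees $c'_{\mid N_2}$ remains reachable whenever $c_{\mid N_2}$ is. This is the same device the paper reuses repeatedly, so the argument should be short.
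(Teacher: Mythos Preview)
Your proposal is correct and follows essentially the same route as the paper: unfold $v \bconc v$ to get a reachable, well-defined configuration $c$ with $c(v)>0$, apply (Forward) from Lemma~\ref{lemma:forward_propagation} to obtain $c'$ with $c'(w)>0$, and observe that $c'_{\mid N_2}$ is still reachable because (Forward) leaves all roots---hence all places of $N_2$---unchanged. One small simplification: you do not need Theorem~\ref{th:configuration_reachability} for the reachability-preservation step; the direct observation $c'_{\mid N_2}=c_{\mid N_2}$ (roots lie outside $\succs{v}\setminus\{v\}$ and $c'(v)=c(v)$) already gives the conclusion, exactly as the paper does implicitly here and in Lemma~\ref{lemma:safe_configurations}.
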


\begin{proof}
  Assume $v \bconc v$. This means that there is a total, well-defined
  configuration $c$ such that $c(v) > 0$ and $c_{\mid N_2} \in R(N_2, m_2)$. Now
  take a successor node of $v$, say $v \to^\star w$. By
  Lemma~\ref{lemma:forward_propagation}, we can find another reachable
  configuration $c'$ such that $c'(w) \geqslant c'(v) = c(v)$ and $c'(x) =
  c(x)$ for all nodes $x$ not in $\succs{v}$. Therefore $w \bconc w$.
\end{proof}

\begin{lemma}[Live Nodes Come from Live
  Roots]\label{lemma:non_dead_places_completeness}
  If $v \bconc v$ then there is a root $v_0$ such that
  $v_0 \bconc v_0$ and $v_0 \to^\star {v}$.
\end{lemma}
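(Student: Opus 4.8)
The plan is to derive this statement directly from the (Backward) clause of Lemma~\ref{lemma:forward_propagation}, since the backward propagation of a token to a root preserves the very configuration that witnesses liveness. The key observation is that (Backward) does not modify the configuration at all; it merely asserts the existence of a marked root within the same well-defined configuration, so the reachability side-condition defining $\bconc$ is carried along for free.

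Concretely, I would first unfold the hypothesis $v \bconc v$ using Definition~\ref{def:bconc}: there exists a total, well-defined configuration $c$ of $\tfg{E}$ with $c_{\mid N_2} \in R(N_2, m_2)$ and $c(v) > 0$. Next I would apply the (Backward) property of Lemma~\ref{lemma:forward_propagation} to the node $v$, which is legitimate precisely because $c(v) > 0$. This yields a root $v_0$ of $\tfg{E}$ such that $v_0 \to^\star v$ and $c(v_0) > 0$, giving immediately the reachability condition $v_0 \to^\star v$ required in the conclusion.

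Finally I would verify $v_0 \bconc v_0$ by exhibiting a witnessing configuration, and the natural (indeed forced) choice is the same $c$: it is total and well-defined by hypothesis, its restriction $c_{\mid N_2}$ is still reachable in $(N_2,m_2)$ because $c$ is untouched, and $c(v_0) > 0$ was just established. By Definition~\ref{def:bconc} this is exactly $v_0 \bconc v_0$, completing the argument.

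I do not expect a genuine obstacle here, since the result is a one-step corollary. The only point that requires a moment of care is recognizing that reusing the \emph{same} configuration $c$ is what makes the proof work: had (Backward) been phrased in terms of a \emph{possibly different} configuration (as (Forward) and (Agglomeration) are), one would have to re-establish that the new configuration restricts to a reachable marking of $N_2$. Because (Backward) keeps $c$ fixed, this complication does not arise, and the reachability clause of $\bconc$ transfers from $v$ to $v_0$ without any additional work.
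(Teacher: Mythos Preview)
Your proposal is correct and follows essentially the same approach as the paper's proof: unfold $v \bconc v$ to obtain a reachable, total, well-defined configuration $c$ with $c(v)>0$, apply the (Backward) clause of Lemma~\ref{lemma:forward_propagation} to extract a root $v_0$ with $v_0 \to^\star v$ and $c(v_0)>0$, and conclude $v_0 \bconc v_0$ via the same $c$. Your explicit remark that (Backward) keeps $c$ unchanged, so the reachability side-condition transfers automatically, is a helpful clarification that the paper leaves implicit.
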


\begin{proof}
  Assume $v \bconc v$. Then there is a total, well-defined
  configuration $c$ such that $c_{\mid N_2} \in R(N_2, m_2)$ and
  $c(v) > 0$. By the backward propagation property of
  Lemma~\ref{lemma:forward_propagation} we know that there is a root,
  say $v_0$, such that $c(v_0) \geq c(v)$ and $v_0 \to^\star v$. Hence
  $v_0$ is not-dead in $\tfg{E}$.
\end{proof}


\subsection{Checking Concurrent Places using Configurations}
\label{sec:check-conc-plac}

We can prove similar results for concurrent nodes instead of live ones. We
consider the two cases mentioned at the beginning of the section: when
concurrent nodes are obtained from two concurrent roots
(Lemma~\ref{lemma:concurrent_nodes_propagation}); or when they are
obtained from a single live root
(Lemma~\ref{lemma:concurrency_redundancy_nodes}), because of redundancy arcs.
Finally Lemma~\ref{lemma:concurrent_places_propagation} provides the associated
completeness result.

\begin{lemma}
  \label{lemma:concurrent_nodes_propagation} Assume $v,w$ are two
  nodes in $\tfg{E}$ such that $v \notin \succs{w}$ and
  $w \notin \succs{v}$. If $v \bconc w$ then $v' \bconc w'$ for all
  pairs of nodes $(v', w') \in \succs{v} \times \succs{w}$.
\end{lemma}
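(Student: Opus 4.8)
The plan is to unfold the definition of $\bconc$ (Def.~\ref{def:bconc}) and then push tokens from $v$ to $v'$ and from $w$ to $w'$ by two successive applications of the (Forward) part of Lemma~\ref{lemma:forward_propagation}. Since $v \bconc w$, there is a reachable, total, well-defined configuration $c$ (so $c_{\mid N_2} \in R(N_2,m_2)$) with $c(v) > 0$ and $c(w) > 0$; by Lemma~\ref{lemma:safe_configurations} we may even take $c(v) = c(w) = 1$. First I would apply (Forward) to $v \to^\star v'$, obtaining a well-defined $c_1$ with $c_1(v') \geqslant c_1(v) = c(v) > 0$ and $c_1(x) = c(x)$ for every $x \notin \succs{v}$. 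Because $w \notin \succs{v}$, this step leaves $w$ untouched, so $c_1(w) = c(w) > 0$; and because the only root lying in $\succs{v}$ is $v$ itself (a root is never an arc target, and $c_1(v)$ is preserved), the restriction $c_{1\mid N_2}$ is unchanged, so $c_1$ is still reachable. Symmetrically, applying (Forward) to $w \to^\star w'$ starting from $c_1$ yields a reachable $c_2$ with $c_2(w') \geqslant c_2(w) = c_1(w) > 0$ and $c_2(x) = c_1(x)$ for every $x \notin \succs{w}$.

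It then remains to show $c_2(v') > 0$, which gives $v' \bconc w'$ and closes the proof. When $v' \notin \succs{w}$ this is immediate, as $c_2(v') = c_1(v') > 0$; here the two incomparability hypotheses $v \notin \succs{w}$ and $w \notin \succs{v}$ are exactly what guarantees that the already-marked node survives each propagation step. I expect the \emph{overlap case} $v' \in \succs{v} \cap \succs{w}$ to be the main obstacle, and it genuinely occurs (for instance when $v'$ is a redundancy node $X \red v'$ fed from both sides), since then the second propagation recomputes $v'$ and could a priori drain its token.

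To handle the overlap I would exploit the freedom left by the (Agglomeration) part of Lemma~\ref{lemma:forward_propagation}: when constructing $c_2$, at each agglomeration lying off the path to $w'$ one is free to choose how the token is split, so I would pick a split that also routes a token towards $v'$. The structural fact that makes this work is condition (T3): an agglomeration target has a unique incoming arc, so the first node $z$ at which a path $v \to^\star v'$ enters $\succs{w}$ must be reached through a \emph{redundancy} arc $z^- \red z$ with $z^- \notin \succs{w}$ (an agglomeration entry $z^- \agg z$ would make $z^-$ the unique parent of $z$, forcing $w \to^\star z^-$ and hence $z^- \in \succs{w}$, contradicting the choice of $z$). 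Since $z^-$ lies outside $\succs{w}$ we have $c_2(z^-) = c_1(z^-) > 0$, and by (CEq) $c_2(z) \geqslant c_2(z^-) > 0$, so a token survives at the boundary of the shared region; I then route it down the chain $z \to^\star v'$ using the agglomeration freedom. The delicate point — and the place where safeness really matters — is to argue that a single token can be steered to $v'$ and $w'$ \emph{simultaneously} without a conflict at a one-token agglomeration split; the incomparability hypotheses together with (T3), which forces every entry into the shared successors to be a redundancy join rather than an agglomeration, are precisely what rules such a fatal conflict out.
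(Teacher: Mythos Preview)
Your two forward-propagation steps for the non-overlap case are exactly what the paper does, including the observation that roots (hence $N_2$-places) sit outside $\succs{v}$ and $\succs{w}$ so reachability is preserved. That part is fine.

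The divergence is in how you treat the ``overlap'' case $v' \in \succs{v} \cap \succs{w}$. You assert that this case \emph{genuinely occurs} and then try to handle it constructively by steering tokens through the shared region. The paper takes the opposite, and much shorter, route: under the standing hypotheses (safe nets, $v \bconc w$, $v \notin \succs{w}$, $w \notin \succs{v}$) the overlap case is \emph{impossible}. The argument is exactly the structural fact you already isolated: a first common successor $r$ of $v$ and $w$ must be entered via a redundancy join $X \red r$ with distinct parents $p \in \succs{v}$ and $q \in \succs{w}$ (an agglomeration entry is ruled out by (T3)); pushing the two tokens of the witnessing configuration down to $p$ and $q$ gives $c(r) \geqslant c(p)+c(q) \geqslant 2$ by (CEq), contradicting Lemma~\ref{lemma:safe_configurations}. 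So $\succs{v} \cap \succs{w} = \emptyset$, hence $v' \notin \succs{w}$, and your easy case is the only case.

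Besides being unnecessary, your overlap argument has soft spots. You use $c_1(z^-) > 0$, but the statement of (Forward) only guarantees the endpoint $c_1(v')$, not intermediate nodes; you would need to invoke the \emph{proof} of (Forward) (token pushed step by step along the chosen path) or restate it. And the final ``steer one token to $v'$ and $w'$ simultaneously without a fatal agglomeration split'' is left as an appeal to (T3) without a concrete construction; making this rigorous would essentially rediscover the contradiction above. The cleaner fix is simply to turn your redundancy-join observation into a safeness contradiction and drop the whole second paragraph.
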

  
\begin{proof}
  Assume $v \bconc w$, $v \notin \succs{w}$ and $w \notin \succs{v}$.
  There must exist a total and well-defined configuration $c$ such
  that $c(v), c(w) > 0$ and $c_{\mid N_2} \in R(N_2, m_2)$.

  Take a successor $v'$ in $\succs{v}$, by applying the token
  propagation from Lemma~\ref{lemma:forward_propagation} we can
  construct a total and well-defined configuration $c'$ of $\tfg{E}$
  such that $c'(v') \geqslant c'(v) = c(v)$ and $c'(x) = c(x)$ for any
  node $x$ not in a $\succs{v}$. This is the case of $w$, hence $c'(w) = c(w) > 0$.

  We can use the token propagation property again, on $c'$. This gives
  a total and well-defined configuration $c''$ such that
  $c''(w') \geqslant c''(w) = c'(w) = c(w)$ and $c''(x) = c'(x)$ for
  any node $x$ not in $\succs{w}$.
  
  If we prove that $v' \notin \succs{w}$ we will then have
  $c''(v') = c'(v') \geqslant c(v)$, and therefore $v'\bconc w'$ as
  needed. We prove this result by contradiction. Indeed, assume
  $v' \in \succs{w}$. Hence,
  $\succs{v} \cap \succs{w} \neq \emptyset$.  Moreover, since $E$ is a
  well-formed TFG, there must exists (condition T3) three nodes
  $p,q,r$ such that $X \red r$, $p \in \succs{v} \cap X$ and
  $q \in \succs{w} \cap X$. In a similar way than the proof of
  Lemma~\ref{lemma:forward_propagation} we can propagate the tokens
  contained in $v,w$ to $p,q$, and obtain $c''(r) > 1$ from (CEq), which
  contradicts our assumption that the nets are safe.
\end{proof}

\begin{lemma}
  \label{lemma:concurrency_redundancy_nodes} If $v \bconc v$ and
  $v \red w$ then $v' \bconc w'$ for every pair of nodes $(v', w')$
  such that $v'\in (\succs(v) \setminus \succs{w})$ and
  $w' \in \succs{w}$.
\end{lemma}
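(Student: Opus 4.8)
The plan is to produce a single reachable, total, well-defined configuration in which both $v'$ and $w'$ carry a token; by Definition~\ref{def:bconc} this is exactly what $v' \bconc w'$ asks for. I would start from the hypothesis $v \bconc v$, which supplies a reachable total well-defined configuration $c$ with $c(v) > 0$. The arc $v \red w$ means that $v$ belongs to some set $X$ with $X \red w$, corresponding to an equation $w = \sum_{x \in X} x$ of $E$; and since $v \to w$ sits inside a DAG, we automatically have $w \in \succs{v}$ and $v \notin \succs{w}$. The guiding intuition is that a redundancy arc never \emph{consumes} the token at its source: as long as $v$ keeps its value, condition (CEq) forces $w$ to carry at least as many tokens as $v$, so the whole subtree $\succs{w}$ can be fed ``for free'' while the token at $v$ is still available to be routed toward $v'$.

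Concretely I would propagate in two \emph{ordered} steps, both instances of the Forward part of Lemma~\ref{lemma:forward_propagation}. First, since $v \to^\star v'$, I obtain a total well-defined $c_1$ with $c_1(v') \geqslant c_1(v) = c(v) > 0$ that agrees with $c$ on every node outside $\succs{v}$. No root lies in $\succs{v}$ (roots are never targets of arcs, so they cannot be proper descendants), hence $c_{1\mid N_2} = c_{\mid N_2}$ and $c_1$ is again reachable. Because $c_1$ is total and well-defined and $v \in X$ with $X \red w$, condition (CEq) yields $c_1(w) = \sum_{x \in X} c_1(x) \geqslant c_1(v) > 0$, so $w$ is marked again even though the first step may have perturbed other values inside $\succs{v}$.

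Second, since $w \to^\star w'$, I apply Forward propagation once more to get a total well-defined $c_2$ with $c_2(w') \geqslant c_2(w) = c_1(w) > 0$, agreeing with $c_1$ on every node outside $\succs{w}$. Here the hypothesis $v' \notin \succs{w}$ is precisely what I need: it guarantees $c_2(v') = c_1(v') > 0$, so the token delivered to $v'$ survives the second propagation; and as before no root lies in $\succs{w}$, so $c_2$ remains reachable. The configuration $c_2$ then satisfies $c_2(v') > 0$ and $c_2(w') > 0$, which establishes $v' \bconc w'$.

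The main obstacle, and the reason the ordering is essential, is the interference between the two subtrees. One cannot treat $v$ and $w$ as independent roots in the style of Lemma~\ref{lemma:concurrent_nodes_propagation}, because $w \in \succs{v}$ forces $\succs{w} \subseteq \succs{v}$, so routing $v$'s token toward $v'$ can in principle overwrite values anywhere in $\succs{v}$, including all of $\succs{w}$. The device that dissolves this is to push $v \to^\star v'$ \emph{first} and only afterward recover the token at $w'$: after the first step the token at $w$ is regenerated automatically by (CEq), and the second step disturbs only $\succs{w}$, which by hypothesis excludes $v'$. The only points requiring care are that each Forward step preserves totality and leaves all root values fixed (hence preserves reachability of the $N_2$-restriction), and that (CEq) can indeed be re-invoked after the possibly perturbing first step.
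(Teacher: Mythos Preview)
Your argument is correct, and it is actually a little cleaner than the paper's. Both proofs rely on the same underlying mechanism---after a Forward propagation that may disturb $\succs{v}$, the redundancy equation $w=\sum_{x\in X}x$ with $v\in X$ forces $c_1(w)\geqslant c_1(v)>0$, so the token at $w$ is ``regenerated'' by (CEq)---but you and the paper deploy it in opposite orders. The paper first pushes $w\to^\star w'$, then pushes $v$ to a \emph{direct} child $v_1\neq w$, recovers $c''(w)>0$ via (CEq), and finally invokes Lemma~\ref{lemma:concurrent_nodes_propagation} on the pair $(v_1,w)$ (which are incomparable by~(T3)) to reach every $v''\in\succs{v_1}$ and $w'\in\succs{w}$. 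Your version pushes $v\to^\star v'$ first, regenerates $w$, then pushes $w\to^\star w'$; the hypothesis $v'\notin\succs{w}$ is exactly what protects the token at $v'$ during the second step, so you land directly on the target pair without the detour through Lemma~\ref{lemma:concurrent_nodes_propagation}. The gain is a shorter, self-contained argument; the paper's route has the minor advantage of reusing an already-proved lemma and of making explicit why (T3) is the structural reason the two branches below $v$ do not interfere.

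Two small points you might tighten: (i) the Forward lemma, as stated, does not literally promise totality of the output configuration, only well-definedness; totality is inherited because the construction copies $c$ outside $\succs{p}$ and fills in all of $\succs{p}$, so it is worth saying this once; (ii) when you argue that no root lies in $\succs{v}$ you should note that $v$ itself may be a root and sits in $\succs{v}$, but this is harmless since the Forward lemma explicitly guarantees $c_1(v)=c(v)$.
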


\begin{proof}
  Assume $v \bconc v$ and $v \red w$. Hence there is a total,
  well-defined configuration $c$ such that $c(v) > 0$ and
  $c_{\mid N_2} \in R(N_2, m_2)$. Furthermore, since $v \red w$, we
  must have $c(w) > 0$ (condition CEq).

  Take $w'$ in $\succs{w}$. From Lemma~\ref{lemma:forward_propagation}
  we can find a total, well-defined configuration $c'$ such that
  $c'(w') \geqslant c'(w) = c(w) > 0$ and $c'(x) = c(x)$ for any node
  $x$ not in $\succs{w}$. Since $v$ is not in $\succs{w}$ we have
  $c'(v) = c(v)$. Likewise, places from $N_2$ are roots and therefore
  cannot be in $\succs{w}$. So we have
  $c'_{\mid N_2} \equiv c_{\mid N_2}$, which means $c'_{\mid N_2}$ is
  reachable in $(N_2, m_2)$. At this point we have $v \bconc w'$.

  Now, consider $v' \neq w$ such that $v \rightarrow v'$. We can use the forward
  propagation lemma a second time on $c'$ to find a total and well-defined
  configuration $c''$ such that $c''(v') \geqslant c''(v) = c'(v)$ and $c''(x) =
  c(x)$ for all nodes $x$ not in $\succs{v}$, and so, $c''_{\mid N_2}$
  is reachable in $(N_2, m_2)$. Since configuration $c''$ is well-defined we
  have (condition CEq) that $c''(v) = c''(w)$. 
  We also have $v' \notin \succs{w}$ and $w \notin \succs{v'}$ since $v
  \rightarrow w$, $v \rightarrow v'$ and $\tfg{E}$ is a well-formed TFG that
  must satisfy (T3). Finally, using
  Lemma~\ref{lemma:concurrent_nodes_propagation} is enough to prove that $v''
  \bconc w'$ for every node $v'' \in \succs{v'}$.
\end{proof}

\begin{lemma}
  \label{lemma:concurrent_places_propagation} If $v \bconc w$ and
  $v \neq w$ then one of the following two conditions is true.
  \begin{description}
  \item[{(Redundancy)}] There is a live node $v_0$ such that
    $v_0 \red w_0$ and either $(w, w)$ or $(w, v)$ are in
    $(\succs{v_0} \setminus \succs{w_0}) \times \succs{w_0}$.
  \item[{(Agglomeration)}] There is a pair of distinct roots
    $(v_0, w_0)$ such that $v_0 \bconc w_0$ with $v \in \succs{v_0}$
    and $w \in \succs{w_0}$.
  \end{description}
\end{lemma}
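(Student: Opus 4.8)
The plan is to prove completeness by taking a witness configuration for $v \bconc w$ and tracing the two tokens back toward the roots, so that the dichotomy in the statement mirrors whether the tokens originate from a single root or from two distinct ones. Concretely, since $v \bconc w$ there is a total, well-defined configuration $c$ with $c_{\mid N_2} \in R(N_2, m_2)$ and $c(v), c(w) > 0$; by Lemma~\ref{lemma:safe_configurations} this forces $c(x) \in \{0,1\}$ everywhere, so $c(v) = c(w) = 1$. Applying the construction behind the (Backward) part of Lemma~\ref{lemma:forward_propagation} repeatedly, I would extract two \emph{marked paths} $P_v : r_v \rightarrow^\star v$ and $P_w : r_w \rightarrow^\star w$, all of whose nodes carry value $1$ and whose sources $r_v, r_w$ are roots.

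The easy case is $r_v \neq r_w$: these are two distinct roots simultaneously marked in the reachable configuration $c$, so $r_v \bconc r_w$, and since $v \in \succs{r_v}$ and $w \in \succs{r_w}$ this is exactly the (Agglomeration) conclusion. The substantial case is $r_v = r_w$, where both tokens descend from one root and the concurrency must instead be created by a branching redundancy. Here I would reuse the notion of level $\lvl{\cdot}$ from the proof of Lemma~\ref{lemma:forward_propagation}: let $u$ be a node lying on both $P_v$ and $P_w$ of maximal level (this set is nonempty as it contains $r_v = r_w$). Since the level strictly increases along any arc, the successor $a$ of $u$ on $P_v$ and the successor $b$ of $u$ on $P_w$ must be distinct, for otherwise $a = b$ would be a common node of strictly larger level. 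Thus $u$ is marked with two distinct marked children $a \neq b$ such that $a \rightarrow^\star v$ and $b \rightarrow^\star w$ (the existence of successors of $u$ on both paths uses that $u$ differs from the endpoints $v$ and $w$, discussed below).

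I would then classify the two divergent arcs $u \rightarrow a$ and $u \rightarrow b$. They cannot both be agglomeration arcs: if $u \agg a$ and $u \agg b$, then (CEq) together with $c(a) = c(b) = 1$ gives $c(u) \geq 2$, contradicting safeness. Hence at least one is a redundancy arc, say $u \red a$ (the case $u \red b$ is symmetric and is what the ``either $(v,w)$ or $(w,v)$'' disjunction records). Taking $v_0 = u$, which is live since $c(u) = 1$, and $w_0 = a$ yields $v_0 \red w_0$ with $v \in \succs{a} = \succs{w_0}$ and $w \in \succs{u}$, i.e. the pair $(w,v) \in (\succs{v_0}\setminus\succs{w_0}) \times \succs{w_0}$ — the (Redundancy) conclusion — provided $w \notin \succs{w_0} = \succs{a}$.

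Establishing $w \notin \succs{a}$ is the main obstacle, since a DAG permits the two branches below $u$ to re-converge. The plan is to argue exactly as in the proof of Lemma~\ref{lemma:concurrent_nodes_propagation}: if $w \in \succs{a}$, then together with $w \in \succs{b}$ we get $\succs{a} \cap \succs{b} \neq \emptyset$ with $a \neq b$ both children of $u$; condition (T3) then forces this merge to happen at a redundancy node, i.e. there is $X \red r'$ with one member in $\succs{a}$ and another in $\succs{b}$. Propagating the tokens of $a$ and $b$ down to these two members, by the double forward/agglomeration propagation of Lemma~\ref{lemma:concurrent_nodes_propagation}, yields a reachable configuration in which $c(r') \geq 2$, contradicting safeness. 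Finally, the whole argument relies on $v$ and $w$ being incomparable for $\rightarrow^\star$ (ensuring $u \neq v$ and $u \neq w$); this always holds in the intended application, where $v$ and $w$ are places of $N_1$ and hence leaves of $\tfg{E}$, so that neither is a successor of the other.
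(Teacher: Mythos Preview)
Your proof follows the same global skeleton as the paper's: fix a reachable witness configuration, invoke backward propagation (Lemma~\ref{lemma:forward_propagation}) to reach roots, and split on whether the two roots coincide. The (Agglomeration) case is handled identically.

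In the single-root case your argument is more constructive than the paper's. You locate the last common node $u$ on the two marked backward paths (via levels) and show that the divergence there cannot be purely agglomerative, which exhibits the redundancy arc $u \red a$ directly; you then discharge $w \notin \succs{a}$ by the same ``merge-at-a-redundancy-node forces $c(r') \geq 2$'' safeness argument used in Lemma~\ref{lemma:concurrent_nodes_propagation}. The paper instead argues the whole single-root case by contradiction: if no suitable $v_1 \red w_1$ existed, then both $v$ and $w$ would be reachable from $v_0$ along $A$-edges only, and backward-propagating both tokens would force $c'(v_0) \geq c(v) + c(w) \geq 2$, contradicting safeness. Both routes pivot on the same safeness obstruction; yours produces an explicit witness at the cost of the extra $w \notin \succs{a}$ step, while the paper's contrapositive avoids that step but leaves the inference ``no redundancy witness $\Rightarrow$ $A$-only paths'' implicit.

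Your closing caveat---that the argument needs $v$ and $w$ incomparable, which holds when they are leaves of $\tfg{E}$---is well taken and is not a defect relative to the paper. The paper's proof silently relies on the same hypothesis in the line ``since $v \neq w$, \dots $c'(v_0) \geq c(v) + c(w) \geq 2$'', which fails when one of $v,w$ sits on the $A$-only path to the other (e.g.\ a root $v$ with $v \agg \{w\}$). You have simply made explicit an assumption the paper also needs and which is satisfied in the only use of the lemma (Theorem~\ref{th:matrix}, where $v,w$ are places of $N_1$).
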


\begin{proof}
  Assume $v \bconc w$. Then there is a total, well-defined
  configuration $c$ such that $c_{\mid N_2} \in R(N_2, m_2)$
  and $c(u), c(v) = 1$ (the nets are safe).
  By the backward-propagation property in
  Lemma~\ref{lemma:forward_propagation} there exists two roots $v_0$
  and $w_0$ such that $c(v_0) = c(w_0) = 1$ with $v \in \succs{v_0}$
  and $w \in \succs{w_0}$. We need to consider two cases, either
  $v_0 \neq w_0$ or $v_0 = w_0$.

  The case where $v_0 \neq w_0$ corresponds to condition
  (Agglomeration).

  In the case where $v_0 = w_0$, we prove that there must be a node
  $v_1$ such that $v_0 \to^\star v_1$ and $v_1 \red w_1$ with either
  $(v, w)$ or $(w, v)$ in
  $(\succs{v_1} \setminus \succs{w_1}) \times \succs{w_1}$. We prove
  this result by contradiction. Indeed, if no such node exists then
  both $v$ and $w$ can be reached from $v_0$ by following only edges
  in $A$. Using the backward propagation property twice, and since
  $v \neq w$, this means that we can find a configuration $c'$ such
  that $c'_{\mid N_2} \equiv c_{\mid N_2}$ and
  $c'(v_0) \geq c(v) + c(w) \geq 2$, which contradicts our hypothesis
  that the nets are safe.
\end{proof}


\subsection{Proof of Theorem~\ref{th:matrix}: our Algorithm is Sound and Complete}

We prove a slightly different property that entails
Th.~\ref{th:matrix}. The following property makes use of the notations
introduced in the previous section and proves an equivalent result but
for all the nodes in $\tfg{E}$, not only for the places in $N_1$.

\begin{maintheorem}
  If $\Conc$ is the matrix returned by a call to
  $\FuncSty{Matrix}(\tfg{E}, \|)$, with $\|$ the concurrency relation
  between roots of $\tfg{E}$ (meaning $N_2$ augmented with the
  constants), then for all nodes $v, w$ we have $v \bconc w$ if and
  only if either $\Conc[v,w] = 1$ or $\Conc[w,v] =1$.
\end{maintheorem}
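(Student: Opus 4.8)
The plan is to prove the stronger statement for all nodes $v, w$ in $\tfg{E}$ (which entails Th.~\ref{th:matrix} since places of $N_1$ are nodes), and to establish the biconditional by proving soundness and completeness separately. For \textbf{soundness} (if $\Conc[v,w]=1$ or $\Conc[w,v]=1$ then $v \bconc w$), I would argue that every write of a $1$ into $\Conc$ corresponds to a genuine concurrency witness. There are exactly three places where the algorithm sets an entry to $1$: line~\ref{line:non_dead} of \FuncSty{Propagate} (the diagonal $\Conc[v,v]$ for a live root, justified directly by $v \bconc_2 v$ and hence $v \bconc v$); line~\ref{line:redundancy_product} of \FuncSty{Propagate} (the redundancy product $\succs{v}\times\succs{w}$ when $v \red w$); and the final loop of \FuncSty{Matrix} (the product $\succs{v}\times\succs{w}$ for two concurrent roots). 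I would match the second case to Lemma~\ref{lemma:concurrency_redundancy_nodes} and the third to Lemma~\ref{lemma:concurrent_nodes_propagation}, checking that the hypotheses of those lemmas (non-overlap of successor sets, liveness of the root) are exactly the conditions guaranteed by the control flow and by well-formedness (T3). The diagonal entries $\Conc[v,v]=1$ for non-root live nodes follow from Lemma~\ref{lemma:non_dead_propagation}, since they are reached only by recursive \FuncSty{Propagate} calls descending from a live root.

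For \textbf{completeness} (if $v \bconc w$ then the algorithm writes a $1$), I would split on whether $v = w$ or $v \neq w$. The diagonal case $v \bconc v$ is handled by Lemma~\ref{lemma:non_dead_places_completeness}: there is a live root $v_0$ with $v_0 \to^\star v$, and I must argue that \FuncSty{Matrix} invokes \FuncSty{Propagate} on $v_0$ (since $v_0 \bconc_2 v_0$) and that \FuncSty{Propagate} visits every successor, setting $\Conc[v,v]=1$. For the off-diagonal case $v \neq w$, the key tool is Lemma~\ref{lemma:concurrent_places_propagation}, which asserts that $v \bconc w$ must arise either from two distinct concurrent roots (Agglomeration) or from a single live root via a redundancy arc (Redundancy). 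In the Agglomeration case, the roots $v_0, w_0$ satisfy $v_0 \bconc_2 w_0$, so the final loop of \FuncSty{Matrix} fires and sets $\Conc[v,w]=1$ as $v \in \succs{v_0}$, $w \in \succs{w_0}$. In the Redundancy case, the live root's \FuncSty{Propagate} call reaches the node $v_0$ with $v_0 \red w_0$, triggering line~\ref{line:redundancy_product} over $(\succs{v_0}\setminus\succs{w_0})\times\succs{w_0}$, which contains the relevant pair.

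The step I expect to be the main obstacle is \textbf{bridging the gap between the static successor-set reasoning of the lemmas and the dynamic recursive behaviour of the algorithm}. The lemmas speak about successor sets $\succs{\cdot}$ and concurrency facts, but the algorithm computes these sets incrementally through mutually recursive \FuncSty{Propagate} calls, accumulating $\mathrm{succs}$ along agglomeration arcs and splicing in $\mathrm{succr}$ along redundancy arcs. I would therefore need an auxiliary invariant stating that a completed call $\FuncSty{Propagate}(\tfg{E},\Conc,v)$ returns exactly the set $\succs{v}$, and that during this call it sets $\Conc[v',w']=1$ for precisely the pairs demanded by Lemmas~\ref{lemma:concurrent_nodes_propagation} and~\ref{lemma:concurrency_redundancy_nodes} among nodes below $v$. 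This invariant is proved by induction on the DAG structure (on node level, as in the proof of Lemma~\ref{lemma:forward_propagation}), and care is needed because redundancy arcs mean $\succs{v}$ is not merely $\{v\}$ together with the successors reached through agglomeration --- the auxiliary $\mathrm{succr}$ branch must be folded back in, matching the set-theoretic decomposition $\succs{v} = \succs{v} \cup \succs{w}$ used implicitly in the return statement. Once this correspondence is pinned down, soundness and completeness follow by threading the lemmas through the induction.
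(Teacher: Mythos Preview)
Your proposal is correct and follows essentially the same approach as the paper: both split into the diagonal (live nodes) and off-diagonal (concurrent pairs) cases, and invoke Lemmas~\ref{lemma:non_dead_propagation} and~\ref{lemma:non_dead_places_completeness} for the former and Lemmas~\ref{lemma:concurrent_nodes_propagation}, \ref{lemma:concurrency_redundancy_nodes}, and~\ref{lemma:concurrent_places_propagation} for the latter. Your write-up is in fact more careful than the paper's own proof, which glosses over the correspondence between the recursive accumulation of $\mathrm{succs}$ in \FuncSty{Propagate} and the successor sets $\succs{v}$---the auxiliary invariant you identify is exactly the missing glue, and proving it by induction on node level is the right move.
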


\begin{proof}
  We can first remark that the call to $\FuncSty{Matrix}(\tfg{E}, \|)$
  will always terminate.  We divide the proof into two different
  cases: first we prove that the computation of live nodes (the
  diagonal of $\Conc$ and the live nodes of $\bconc$) is sound and
  complete. Next, we prove the same result for concurrent nodes.

  \noindent\textbf{(Non-dead Places)} The result is a direct
  consequence of Lemmas~\ref{lemma:non_dead_propagation}
  and~\ref{lemma:non_dead_places_completeness}.

  \noindent\textbf{(Concurrent Places)} We need to consider the two
  cases describe in
  Lemma~\ref{lemma:concurrent_places_propagation}. The second
  \KwSty{foreach} loop in the code of \FuncSty{Matrix} takes care of
  the cases where concurrency is a consequence of two distinct live
  roots. The second case corresponds to the \KwSty{foreach} loop at
  line~\ref{line:redundancy_product} in the code of
  \FuncSty{Propagate}, for the matrix, and
  Lemma~\ref{lemma:concurrent_nodes_propagation} for
  $\bconc$. Finally,
  Lemma~\ref{lemma:concurrent_places_propagation} implies that this
  phase of the computation is 
  complete.
\end{proof}


\subsection{Axioms for Computing Incomplete Concurrency
  Matrices}\label{sec:axioms-comp-incompl}


Our algorithm for the case of incomplete matrices is based on a
collection of six additional axioms used to ``propagate $0$s'' in the
matrix $\Conc$.  We state each axiom separately and, in each case, we
prove a property that states that the axiom is sound. Completeness
takes a different meaning in this case. Indeed, we cannot prove that
we find all the pairs of nonconcurrent nodes. But we can prove a result
about the accuracy, meaning that all verdicts
$\Conc[v,w] \neq \bullet$ must originate from some roots $p, q$
such that $\Conc[p,q]$ is defined. More precisely, two concurrent
nodes ($\Conc[v,w] = 1$) must come from concurrent roots (or one live
root), and two nonconcurrent nodes ($\Conc[v,w] = 0$) imply that roots
leading to $v$ must all be nonconcurrent from roots leading to $w$.

In the following, we use the notation $v \bind w$ to say
$\neg (v \Conc w)$; meaning $v, w$ are nonconcurrent according to
$\Conc$. With our notations, $v \bind v$ means that $v$ is dead: there
is no well-defined, reachable configuration $c$ with $c(v) > 0$.

\subsubsection{Propagation of Dead Nodes.}
We prove that a dead node, $v$, is necessarily nonconcurrent from all
the other nodes. Also, if all the ``direct successors'' of a node are
dead then also is the node.

\begin{lemma}
  \label{lemma:independent_nodes_propagation_1} Assume $v$ a node in
  $\tfg{E}$.  If $v \bind v$ then for all nodes $w$ in $\tfg{E}$ we
  have $v \bind w$.
\end{lemma}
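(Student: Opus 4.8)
The plan is to prove the contrapositive of the concurrency condition directly from the definition of $\bconc$ (Definition~\ref{def:bconc}), since the statement is essentially an immediate unfolding of that definition. Recall that $v \bind v$ abbreviates $\neg(v \bconc v)$, which by Definition~\ref{def:bconc} means precisely that there is \emph{no} total, well-defined configuration $c$ of $\tfg{E}$ that is reachable (i.e.\ $c_{\mid N_2} \in R(N_2, m_2)$) and satisfies $c(v) > 0$. In other words, $v$ is dead: no reachable configuration ever places a token on $v$.

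First I would assume, toward a contradiction, that $v \bconc w$ holds for some node $w$. By Definition~\ref{def:bconc} this yields a total, well-defined configuration $c$ with $c_{\mid N_2} \in R(N_2, m_2)$ and both $c(v) > 0$ and $c(w) > 0$. The key observation is that this very configuration $c$ already witnesses the liveness of $v$: it is reachable and satisfies $c(v) > 0$, which is exactly the condition required for $v \bconc v$ (taking the two nodes in the definition to both be $v$). Hence $v \bconc v$, contradicting the hypothesis $v \bind v$. I would therefore conclude that no such $w$ exists, i.e.\ $v \bind w$ for every node $w$ in $\tfg{E}$.

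I do not expect any real obstacle here, as the result is purely definitional: the asymmetry built into $\bconc$ is only apparent, since requiring $c(v) > 0$ \emph{and} $c(w) > 0$ already forces $c(v) > 0$ in isolation. No token-propagation machinery from Lemma~\ref{lemma:forward_propagation} is needed; the only care required is to note that the same configuration $c$ serves simultaneously as a witness for $v \bconc w$ and for $v \bconc v$, which is immediate because the reachability condition $c_{\mid N_2} \in R(N_2, m_2)$ and the condition $c(v) > 0$ do not mention $w$ at all.
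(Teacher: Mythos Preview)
Your proposal is correct and follows essentially the same approach as the paper: both arguments unfold Definition~\ref{def:bconc} directly, observing that $v \bind v$ means no reachable, well-defined configuration has $c(v) > 0$, which immediately precludes $v \bconc w$ for any $w$. The paper phrases this as a direct implication while you phrase it as a contradiction, but the content is identical.
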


\begin{proof}
  Assume $v \bind v$. Then for any total and well-defined
  configuration $c$ such that $c_{\mid N_2}$ is reachable in
  $(N_2, m_2)$ we have $c(v) = 0$. By definition of the concurrency
  relation $\bconc$, $v$ cannot be concurrent to any node.
\end{proof}

\begin{lemma}\label{lemma:dead_node} Assume $v$ a node in $\tfg{E}$ such that $v
  \agg X$ or $X \red v$. Then $v \bind v$ if and only if $w \bind w$ for all
  nodes $w$ in $X$.
\end{lemma}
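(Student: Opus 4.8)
The plan is to reduce the whole statement to the single numerical identity furnished by condition (CEq). First I would observe that, by well-formedness (condition T4), whether we are in the case $v \agg X$ or in the case $X \red v$, the associated equation in $E$ is the \emph{same}, namely $v = \sum_{w \in X} w$. Consequently, by condition (CEq), every total, well-defined configuration $c$ of $\tfg{E}$ satisfies the equality $c(v) = \sum_{w \in X} c(w)$. Since configuration values are non-negative integers---indeed in $\{0,1\}$ by Lemma~\ref{lemma:safe_configurations}, though non-negativity is all I use---this one identity handles both arc directions uniformly and carries the entire argument. No appeal to token propagation (Lemma~\ref{lemma:forward_propagation}) is needed, because the witnessing configuration already assigns a value to $v$ and to every $w \in X$ at once.

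Next I would recall (Def.~\ref{def:bconc}) that $w \bconc w$ holds exactly when there is a total, well-defined configuration $c$ with $c_{\mid N_2} \in R(N_2, m_2)$ and $c(w) > 0$, so that $w \bind w$ is precisely the negation of this. I would then prove the two implications by contraposition. For the direction ``all $w$ dead $\Rightarrow v$ dead'', suppose $v$ were not-dead, witnessed by a total well-defined configuration $c$ with $c_{\mid N_2}$ reachable and $c(v) > 0$; then $\sum_{w \in X} c(w) = c(v) > 0$ forces some $w \in X$ to have $c(w) > 0$, hence $w \bconc w$, contradicting the hypothesis. For the direction ``$v$ dead $\Rightarrow$ all $w$ dead'', suppose some $w \in X$ were not-dead, witnessed by a total well-defined reachable $c$ with $c(w) > 0$; then $c(v) = \sum_{w' \in X} c(w') \geqslant c(w) > 0$ exhibits the very same $c$ as a witness for $v \bconc v$, contradicting $v \bind v$.

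I do not expect a genuine obstacle here: the only point demanding care is to ensure that the configurations witnessing (non-)deadness are \emph{total}, so that (CEq) actually applies to them---and this is guaranteed by Def.~\ref{def:bconc}, which quantifies over total configurations. The apparent structural subtlety, that one must separately treat $v \agg X$ and $X \red v$, dissolves as soon as one notes via (T4) that both arc patterns encode the identical equation $v = \sum_{w \in X} w$; the proof therefore need not split into cases at all.
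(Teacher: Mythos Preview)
Your proposal is correct and follows essentially the same approach as the paper: both directions are argued by contradiction (equivalently, contraposition) using only condition (CEq), the equation $c(v) = \sum_{w \in X} c(w)$, and non-negativity of configuration values, with no case split between $v \agg X$ and $X \red v$ and no appeal to token propagation. Your write-up is in fact slightly more careful than the paper's, since you explicitly track the reachability condition $c_{\mid N_2} \in R(N_2, m_2)$ carried by the witnessing configuration, which the paper leaves implicit.
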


\begin{proof}
  We prove by contradiction both directions.
  
  Assume $v \bind v$ and take $w \in X$ such that $w \bconc w$. Then there
  is a total, well-defined configuration $c$ such that $c(w) > 0$.
  Necessarily, since $v \bind v$ we have $c(v) = 0$, which contradicts (CEq).

  Next, assume $v \bconc v$ and $w \bind w$ for every node $w \in X$. Then there
  is a total, well-defined configuration $c$ such that $c(v) > 0$.
  Necessarily, for all nodes $w \in X$ we have $c(w) = 0$, which also contradicts
  (CEq).
\end{proof}

These properties imply the soundness of the following three axioms:
\begin{itemize}
\item If $\Conc[v,v] = 0$ then $\Conc[v, w] = 0$ for all node $w$ in
  $\tfg{E}$.
\item If $v \agg X$ or $X \red v$ and $\Conc[w,w] = 0$ for all nodes $w \in X$
  then $\Conc[v,v] = 0$.
\item If $v \agg X$ or $X \red v$ and $\Conc[v,v] = 0$ then $\Conc[w,w] = 0$ for
  all nodes $w \in X$.
\end{itemize}

\subsubsection{Independency between Siblings.}
We prove that direct successors of a node are nonconcurrent from each
other (in the case of safe nets). This is basically a consequence of
the fact that $c(v) = c(w) + c(w') + \dots$ and $c(v) \leqslant 1$
implies that at most one of $c(w)$ and $c(w')$ can be equal to $1$
when the configuration is fixed.

Like with our ``safeness property'', we assume for the sake of
simplicity that all the leaves in $\tfg{E}$ are places in $N_1$.

\begin{lemma}
  \label{lemma:independent_nodes_propagation_2} Assume $v$ a node in
  $\tfg{E}$ such that $v \agg X$ or $X \red v$. For every pair of
  nodes $w,w'$ in $X$, we have that $w \neq w'$ implies
  $w \bar{C} w'$.
\end{lemma}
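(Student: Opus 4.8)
The plan is to argue by contradiction, relying only on the definition of $\bconc$ (Def.~\ref{def:bconc}), the safeness property (Lemma~\ref{lemma:safe_configurations}), and condition \textbf{(CEq)}. Suppose the conclusion fails, so that $w \bconc w'$ for some pair of distinct nodes $w, w' \in X$. By the definition of the concurrency relation on $\tfg{E}$, this yields a total, well-defined configuration $c$ with $c_{\mid N_2} \in R(N_2, m_2)$ and with both $c(w) > 0$ and $c(w') > 0$.

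Next I would bring in safeness. Since both marked nets are safe and $c_{\mid N_2}$ is reachable, Lemma~\ref{lemma:safe_configurations} tells us that $c(u) \in \{0, 1\}$ for every node $u$ of $\tfg{E}$; in particular $c(v) \leqslant 1$. Because $c(w)$ and $c(w')$ are positive integers, each is in fact equal to $1$.

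The final step is to apply \textbf{(CEq)} to the node $v$. The hypothesis $v \agg X$ or $X \red v$ means (by condition T4) that the equation $v = \sum_{u \in X} u$ belongs to $E$, and since $c$ is total we have $c(v) \neq \bot$. Hence \textbf{(CEq)} gives $c(v) = \sum_{u \in X} c(u)$. As all summands are non-negative and $w \neq w'$ are two distinct elements of $X$, we obtain $c(v) \geqslant c(w) + c(w') = 2$, contradicting $c(v) \leqslant 1$. Therefore no such configuration exists, and $w \bind w'$ as claimed.

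There is no genuine obstacle here: the lemma is an immediate consequence of the interplay between the conservation equation \textbf{(CEq)} and $1$-boundedness. The only points requiring care are that totality of $c$ is what licenses the use of \textbf{(CEq)} (so that $c(v) \neq \bot$), and that the reachability clause $c_{\mid N_2} \in R(N_2, m_2)$ built into Def.~\ref{def:bconc} is exactly what is needed to invoke the safeness bound of Lemma~\ref{lemma:safe_configurations}. The argument is uniform across the two cases $v \agg X$ and $X \red v$, since both correspond to the same linear equation in $E$.
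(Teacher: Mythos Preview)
Your proof is correct and follows essentially the same approach as the paper: a contradiction argument where $w \bconc w'$ yields a reachable, well-defined configuration with $c(w), c(w') \geqslant 1$, and then \textbf{(CEq)} forces $c(v) \geqslant 2$, violating the safeness bound of Lemma~\ref{lemma:safe_configurations}. Your write-up is in fact slightly more detailed than the paper's (you make explicit the role of totality in licensing \textbf{(CEq)} and the appeal to T4), but the underlying argument is identical.
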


\begin{proof}
  The proof is by contradiction.  Take a pair of distinct nodes $w,w'$
  in $X$ and assume $w \bconc w'$. Then there exists a total and
  well-defined configuration $c$ such that $c(w) \geqslant 1$ and
  $ c(w') \geqslant 1$, with $c_{\mid N_2}$ reachable in $(N_2,
  m_2)$. Since $c$ must satisfy (CEq) we have $c(v) \geqslant 2$,
  which contradicts the fact that our nets are safe, see
  Lemma~\ref{lemma:safe_configurations}.
\end{proof}

This property implies the soundness of the following axiom:
\begin{itemize}
\item If $v \agg X$ or $X \red v$ then $\Conc[w,w'] = 0$ for all pairs
  of nodes $w, w' \in X$ such that $w \neq w'$.
\end{itemize}

\subsubsection{Heredity and Independency.}
We prove that if $v$ and $v'$ are nonconcurrent, then $v'$ must be
nonconcurrent from all the direct successors of $v$ (and
reciprocally). This is basically a consequence of the fact that
$c(v) = c(w) + \dots$ and $c(v) + c(v') \leqslant 1$ implies that
$c(w) + c(v') \leqslant 1$.

\begin{lemma}
  \label{lemma:independent_nodes_propagation_3} Assume $v$ a node in
  $\tfg{E}$ such that $v \agg X$ or $X \red v$. Then for every node
  $v'$ such that $v \bind v'$ we also have $w \bind v'$ for every node
  $w$ in $X$. Conversely, if $w \bind v'$ for every node $w$ in $X$
  then $v \bind v'$.
\end{lemma}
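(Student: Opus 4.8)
The plan is to prove both implications by contradiction, relying only on condition \textbf{(CEq)} from the definition of well-defined configurations together with the definition of the TFG concurrency relation $\bconc$; none of the token-propagation machinery is needed here. The single structural fact I will exploit is that, whether $v \agg X$ or $X \red v$, every total well-defined configuration $c$ satisfies $c(v) = \sum_{w \in X} c(w)$, and hence $c(v) \geqslant c(w)$ for each $w \in X$. Read one way, a token on some sibling $w \in X$ forces a token on the common node $v$; read the other way, a token on $v$ must be accounted for by at least one positive sibling. This is exactly the monotonicity captured informally by the remark that $c(v) = c(w) + \cdots$ together with $c(v) + c(v') \leqslant 1$ yields $c(w) + c(v') \leqslant 1$.

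For the forward direction I would assume $v \bind v'$ and suppose, for contradiction, that $w \bconc v'$ for some $w \in X$. By definition of $\bconc$ there is a total, well-defined configuration $c$ with $c_{\mid N_2} \in R(N_2, m_2)$ and $c(w) > 0$, $c(v') > 0$. The inequality $c(v) \geqslant c(w)$ obtained from \textbf{(CEq)} gives $c(v) > 0$, so the \emph{same} configuration $c$ witnesses $c(v) > 0$ and $c(v') > 0$ with $c_{\mid N_2}$ reachable, i.e.\ $v \bconc v'$. This contradicts $v \bind v'$, and therefore $w \bind v'$ for every $w \in X$.

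For the converse I would assume $w \bind v'$ for all $w \in X$ and suppose, for contradiction, that $v \bconc v'$. Then there is a total, well-defined reachable configuration $c$ with $c(v) > 0$ and $c(v') > 0$. Since $c(v) = \sum_{w \in X} c(w) > 0$, at least one sibling $w_0 \in X$ satisfies $c(w_0) > 0$, so this same $c$ witnesses $w_0 \bconc v'$, contradicting $w_0 \bind v'$. Hence $v \bind v'$.

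The argument is essentially a one-line application of \textbf{(CEq)} in each direction, so I do not anticipate a real obstacle. The only point deserving care is that the witnessing configuration is reused verbatim rather than rebuilt by propagation, so I would note explicitly that reachability of $c_{\mid N_2}$ is preserved for free (the configuration is unchanged). I would also remark that the corner cases $v' = v$ and $v' \in X$ are subsumed by the same reasoning and need no separate treatment; should one prefer, they can be cross-checked against Lemma~\ref{lemma:independent_nodes_propagation_1} and Lemma~\ref{lemma:dead_node}.
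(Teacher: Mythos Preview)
Your proposal is correct and follows essentially the same approach as the paper: both directions are proved by contradiction using only \textbf{(CEq)} and the definition of $\bconc$, with the same witnessing configuration reused verbatim to derive the contradiction. Your write-up is in fact slightly more explicit than the paper's (you spell out $c(v)\geqslant c(w)$ and the preservation of reachability), but the underlying argument is identical.
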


\begin{proof}
  We prove by contradiction each property separately.

  Assume $v \bind v'$ and take $w \in X$ such that $w \bconc v'$. Then there
  is a total, well-defined configuration $c$ such that $c(w),c(v') > 0$.
  Necessarily, since $v \bind v'$ we must have $c(v) = 0$ or $c(v') = 0$. We
  already know that $c(v') > 0$, so $c(v) = 0$, which contradicts (CEq) since $w
  \in X$.
  
  Next, assume $w \bind v'$ for all nodes $w \in X$ and we have $v \bconc v'$.
  Then there is a total, well-defined configuration $c$ such that
  $c(v),c(v') > 0$. Necessarily, for all nodes $w \in X$ we have $c(w) = 0$ or
  $c(v') = 0$. We already know that $c(v') > 0$, so $c(w) = 0$ for all nodes $w
  \in X$, which also contradicts (CEq).
\end{proof}

These properties imply the soundness of the following two axioms:
\begin{itemize}
\item If $v \agg X$ or $X \red v$ and $\Conc[w,v'] = 0$ for all
  nodes $w \in X$ then $\Conc[v,v'] = 0$.
\item If $v \agg X$ or $X \red v$ and $\Conc[v,v'] = 0$ then
  $\Conc[w,v'] = 0$ for all nodes $w$ in $X$.
\end{itemize}





\end{document}